\documentclass[a4paper,UKenglish,cleveref, autoref, thm-restate]{lipics-v2021}



\newtheorem{reduction}{Reduction-Rule} 

\usepackage{tcolorbox}

\newcommand{\good}[0]{good}

\bibliographystyle{plainurl}

\title{ A Refined Kernel for $d$-Hitting Set
} 

\titlerunning{A refined kernel for $d$-Hitting Set} 

\author{Yuxi Liu}{University of Electronic Science and Technology of China, Chengdu, China}{202211081321@std.uestc.edu.cn}{}{}


\author{Mingyu Xiao}{University of Electronic Science and Technology of China, Chengdu, China}{myxiao@uestc.edu.cn}{https://orcid.org/0000-0002-1012-2373}{}

\authorrunning{J. Open Access and J.\,R. Public}

\Copyright{J. Open Access and J.\,R. Public} 

\ccsdesc[100]{\textcolor{red}{Theory of computation $\rightarrow$ Parameterized complexity and exact algorithms}} 

\keywords{$d$-Hitting Set, Crown Decomposition, Linear Programming, Kernelization} 

\category{} 

\relatedversion{} 




\nolinenumbers 

\EventEditors{John Q. Open and Joan R. Access}
\EventNoEds{2}
\EventLongTitle{42nd Conference on Very Important Topics (CVIT 2016)}
\EventShortTitle{CVIT 2016}
\EventAcronym{CVIT}
\EventYear{2016}
\EventDate{December 24--27, 2016}
\EventLocation{Little Whinging, United Kingdom}
\EventLogo{}
\SeriesVolume{42}
\ArticleNo{23}

\begin{document}

\maketitle

\begin{abstract}
The \textsc{$d$-Hitting Set} problem is a fundamental problem in parameterized complexity, which asks whether a given hypergraph contains a vertex subset $S$ of size at most $k$ 
that intersects every hyperedge (i.e., $S \cap e \neq \emptyset$ for each hyperedge $e$).
The best known kernel for this problem, established by Abu-Khzam~\cite{DBLP:journals/jcss/Abu-Khzam10}, 
has $(2d - 1)k^{d - 1} + k$ vertices. This result has been very widely used in the literature as many problems can be modeled as a special $d$-Hitting Set problem. 
In this work, we present a refinement to this result by employing linear programming 
techniques to construct crown decompositions in hypergraphs. 
This approach yields a slight but notable improvement, reducing the size to $(2d - 2)k^{d - 1} + k$ vertices.
\end{abstract}

\section{Introduction}

The \textsc{$d$-Hitting Set} problem is a fundamental problem in theoretical computer science that has been extensively studied across multiple algorithmic paradigms for several decades. We begin with its formal definition:

\noindent\rule{\linewidth}{0.2mm}
\textsc{$d$-Hitting Set}\\
\textbf{Input:} A hypergraph $H=(V, E)$ where $|e|\leq d$ for all $e\in E$, and an integer $k$.\\
\textbf{Question:} Does there exist a vertex subset $S\subseteq V$ with $|S|\leq k$ such that $S\cap e \neq \emptyset$ for every $e\in E$?\\
\rule{\linewidth}{0.2mm}

This problem serves as a unifying framework for numerous graph modification problems. For any fixed graph $H$, the $H$-Free Vertex Deletion problem (deleting $k$ vertices to eliminate all $H$-subgraphs) is a special case of \textsc{$|V(H)|$-Hitting Set}. Important special cases include:
\begin{itemize}
    \item \textsc{Vertex Cover} (\textsc{2-Hitting Set}) when $H$ is a single edge.
    \item \textsc{Triangle-Free Vertex Deletion} and \textsc{Cluster Vertex Deletion} (\textsc{3-Hitting Set}) when $H=K_3$. 
    \item \textsc{$d$-Path Vertex Cover}, \textsc{$d$-Component Order Connectivity}, and \textsc{$d$-Bounded-Degree Vertex Deletion}.
\end{itemize}
The problem also has connections to descriptive complexity, where fragments of first-order logic can be reduced to \textsc{$d$-Hitting Set}~\cite{chen2017slicewise}.

Given its broad applicability, \textsc{$d$-Hitting Set} has been thoroughly investigated from multiple algorithmic perspectives. Our work focuses on kernelization results. The classical result, first established in 2004~\cite{fellows2008faster} and alternatively provable via the Erdős-Rado Sunflower Lemma~\cite{DBLP:series/txtcs/FlumG06}, yields a kernel with $O(k^d)$ vertices and hyperedges. This bound is essentially tight due to the following lower bound: 

\begin{proposition}[\cite{dell2014satisfiability}]\label{prop:lower-bound}
    For any constant integer $d > 1$, unless \textsf{co-NP} $\subseteq$ \textsf{NP}/poly, \textsc{$d$-Hitting Set} admits no polynomial-time reduction to instances of size $O(n^{d-\epsilon})$ for any $\epsilon > 0$.
\end{proposition}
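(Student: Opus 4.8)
The plan is to establish the contrapositive with the standard kernelization-lower-bound machinery, reducing ultimately to the impossibility of compressing many instances of an \textsf{NP}-hard problem into few bits~\cite{dell2014satisfiability}. So suppose, for contradiction, that for some $\epsilon>0$ there is a polynomial-time algorithm $\mathcal{A}$ that maps every $n$-vertex instance of \textsc{$d$-Hitting Set} to an equivalent instance, of an arbitrary target problem, having bit-length $O(n^{d-\epsilon})$. I would combine $\mathcal{A}$ with a \emph{packing} --- a weak $d$-cross-composition of many \textsc{$d$-Hitting Set} instances into one --- and then invoke the OR-distillation lower bound of Fortnow and Santhanam, in the refined ``oracle communication protocol'' form of Dell and van Melkebeek~\cite{dell2014satisfiability}, to conclude that \textsf{co-NP} $\subseteq$ \textsf{NP}/poly.

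The heart of the argument is the packing step. Start from $t$ instances $I_1,\dots,I_t$ of an \textsf{NP}-hard source problem; one may take \textsc{$d$-Hitting Set} itself, which is \textsf{NP}-hard for every $d\ge 2$ because it generalizes \textsc{Vertex Cover}. First normalize by padding so that all $I_j$ share one vertex set $U$ with $|U|=n$ and a common budget $k$, inflating $n$ and $k$ only polynomially. Set $m=\lceil t^{1/d}\rceil$, index the (padded) instances by tuples in $[m]^d$, fill the unused slots with trivial no-instances, and build a single \textsc{$d$-Hitting Set} instance $H^\star$ whose vertices are $U$ together with $d$ ``coordinate blocks'' of $m$ selector vertices each, so that $|V(H^\star)| = n + dm = O\!\big(t^{1/d}\cdot\mathrm{poly}(n)\big)$. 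The hyperedges of $H^\star$ encode, for each index $\vec a\in[m]^d$, the hyperedges of $I_{\vec a}$ routed through the $d$ coordinate blocks, plus consistency hyperedges on the selector vertices, so that every hitting set of $H^\star$ of size at most a prescribed budget is forced to commit to a single tuple $\vec a$ and then to hit $I_{\vec a}$; consequently $H^\star$ is a yes-instance iff at least one $I_j$ is (an AND-composition would do equally well). The delicate point --- and the reason the exponent is exactly $d$ --- is the width constraint: every hyperedge of $H^\star$ must have size at most $d$, and a $d$-element hyperedge can meet at most $d$ of the coordinate blocks, which is precisely what allows $\Theta(m^d)=\Theta(t)$ sub-instances to be addressed using only $O(t^{1/d})$ vertices, provided the routing gadget is designed so that an address together with a sub-instance hyperedge never overflows width $d$.

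Granting the packing, the remainder is bookkeeping. Apply $\mathcal{A}$ to $H^\star$: since $H^\star$ has $N=O\!\big(t^{1/d}\,\mathrm{poly}(n)\big)$ vertices, the output has bit-length $O(N^{d-\epsilon}) = O\!\big(t^{1-\epsilon/d}\,\mathrm{poly}(n)\big)$. Taking $t$ to be a sufficiently large polynomial in $n$ (depending on $d$ and $\epsilon$) makes this asymptotically smaller than $t$ itself --- fewer output bits than the number of source instances. This is exactly the situation forbidden by the distillation lower bound: an OR-composition of $t$ instances of an \textsf{NP}-hard problem followed by a compression to sub-$t$ size yields the object that Dell and van Melkebeek's oracle-communication-protocol argument rules out (a ``weak'' OR-distillation, the point of their refinement being to tolerate an output length that still grows with $t$), and hence \textsf{co-NP} $\subseteq$ \textsf{NP}/poly. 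Since $\epsilon>0$ was arbitrary, the proposition follows.

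The main obstacle is unquestionably the packing construction: building a reduction with only a $t^{1/d}$ blow-up in the number of vertices while simultaneously (i) keeping every hyperedge of size at most $d$; (ii) keeping the combined answer equal to the OR of the parts --- in particular, preventing a hitting set from ``cheating'' by mixing hyperedges of different sub-instances or by abusing the selector vertices; and (iii) keeping the budget of $H^\star$ polynomially bounded. A secondary issue is matching the packed reduction to the precise lower-bound statement used: the oracle-communication-protocol formulation is what licenses an output whose size grows with $t$, which the bare Fortnow--Santhanam distillation theorem does not, and one must also verify that the padding equalizing instance sizes and budgets, and the trivial no-instances filling the spare slots of $[m]^d$, leave the OR-equivalence intact.
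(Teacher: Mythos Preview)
The paper does not prove this proposition; it is quoted with a citation to~\cite{dell2014satisfiability} purely to motivate the upper bounds, so there is no in-paper argument to compare against.

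Your overall plan---compose $t$ instances into one with only a $t^{1/d}$ vertex blow-up, apply the hypothetical compression, then invoke the oracle-communication lower bound of Dell and van Melkebeek---is the right framework, and the arithmetic in your last paragraph is correct. The gap is the packing itself. In your sketch the combined instance lives on a \emph{shared} universe $U$ together with $d$ blocks of selector vertices, and the sub-instance hyperedges are ``routed through'' the selectors. But a hyperedge of $I_{\vec a}$ already occupies up to $d$ vertices of $U$; adjoining even one selector pushes it past width $d$, while leaving it selector-free means the selectors cannot hit it and hence cannot deactivate the $t-1$ unchosen sub-instances. No gadget on this architecture meets your conditions (i) and (ii) simultaneously---the obstacle you correctly flag is not merely delicate, it is fatal to this particular layout. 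The construction in~\cite{dell2014satisfiability} does not share the universe in this way: it replicates (pieces of) it across the coordinate blocks, starting from a $d$-partite NP-hard source so that each packed hyperedge naturally places one vertex per block and stays at width exactly $d$, and the equivalence with the OR of the inputs is established through the oracle-protocol machinery rather than by a one-shot selector gadget. A minor side point: an AND-composition would yield the dual collapse $\textsf{NP}\subseteq\textsf{co-NP}/\textsf{poly}$, not the one stated, so your parenthetical is not quite right either.
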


While the $O(k^d)$ bound for hyperedges is optimal, the vertex kernel bound remained improvable. Abu-Khzam~\cite{DBLP:journals/jcss/Abu-Khzam10} achieved a breakthrough in 2007 with a vertex kernel of $(2d-1)k^{d-1} + k$, which has become a cornerstone result in parameterized algorithms due to the problem's wide applicability. A major open question in kernelization is whether \textsc{$d$-Hitting Set} admits an $O(k^{d-1-\epsilon})$ vertex kernel for some $\epsilon > 0$~\cite{bessy2011kernels,dom2010fixed,fomin2019kernelization,you2017approximate,fomin2023lossy}.

Recent work on lossy kernelization by Fomin et al.~\cite{fomin2023lossy} demonstrates that for any $\epsilon > 0$, \textsc{$d$-Hitting Set} admits a (randomized) pure $(d-\delta)$-approximate kernelization protocol producing kernels of size $O(k^{1+\epsilon})$, where the number of rounds and $\delta > 0$ are constants depending only on $d$ and $\epsilon$.

\vspace{2mm}
\noindent\textbf{The Contributions.}
In this paper, we show that \textsc{$d$-Hitting Set} admits a kernel with at most $(2d - 2)k^{d - 1} + k$ vertices. In particular, for $d = 3$, the \textsc{3-Hitting Set} problem admits a kernel with $4k^2 + k$ vertices. Our result improves upon the previous bound of $(2d - 1)k^{d - 1} + k$.

In the previous algorithm, 
one of the most important steps is to use a technique called \textit{HS crown decomposition} to reduce the instance size. 
In contrast, our algorithm does not rely on analyzing the combinatorial structure to identify HS crown decompositions. Instead, we employ a linear programming approach to compute them. This method enables us to find HS crown decompositions in smaller hypergraphs, ultimately leading to a more efficient kernel.

\vspace{2mm}
\noindent\textbf{The Organization.} 
Section~2 begins with a review of the fundamental definitions and the establishment of notation. In Section~3, we introduce the concepts of VC crown decomposition and HS crown decomposition, which serve as important tools in our algorithm. Section~4 presents and analyzes the algorithm, with the proof of the final reduction rule deferred to Section~5. Finally, Section~6 concludes the paper.

\section{Preliminaries}

In this paper, we denote a simple and undirected graph as $G$ and a hypergraph as $H$.
The input hitting set instance $(V, E)$ is treated as a hypergraph, where the vertices and hyperedges correspond to the elements of $V$ and $E$, respectively.
We use $V(H')$ and $E(H')$ to denote the vertex set and hyperedge set of a hypergraph $H'$, respectively.
For a graph $G$,
a vertex $v$ is called a \emph{neighbor} of a vertex $u$ if $\{u, v\} \in E(G)$.
Let $N(v)$ denote the set of neighbors of $v$.
For a vertex subset $X$, let $N(X) = \cup_{v\in X} N(v)\setminus X$ and $N[X] = N(X)\cup X$.

For a hypergraph $H$,
an \emph{independent set} is a set of vertices where no two vertices belong to the same hyperedge in $E(H)$.
A vertex subset $S$ \emph{hits} a hyperedge subset $E'\subseteq E(H)$ if $e \cap S \neq \emptyset$ holds for every $e\in E'$.
For each vertex $v$, we use $E(v)$ to denote the set of all hyperedges containing $v$.
For a hyperedge subset $E' \subseteq E(H)$, we use $V(E')$ to denote the set of vertices contained in the hyperedges of $E'$.
A \emph{subedge} of $H$ is a non-empty subset of $V$ that is contained in a hyperedge of $E(H)$.
An $i$-subedge is a subedge with exactly $i$ vertices.
For a subedge $e$, we use $E(e)$ to denote the set of all hyperedges containing $e$.
For a subedge set $E'$, we also use $V(E')$ to denote the set of vertices contained in the subedges of $E'$.

An instance of \textsc{$d$-Hitting Set} with input $H$ and $k$ is denoted as a pair $(H, k)$. For two instances $(H', k')$ and $(H'', k'')$, we say they are \emph{equivalent} if $(H, k)$ is a \textbf{yes}-instance if and only if $(H', k')$ is a \textbf{yes}-instance.
We assume that readers are familiar with kernelization. For basic concepts, please refer to references~\cite{cygan2015parameterized,fomin2019kernelization}.

\section{HS crown decomposition}



The classic crown decomposition method, known as VC crown decomposition, was originally introduced for the \textsc{Vertex Cover} problem \cite{DBLP:conf/alenex/Abu-KhzamCFLSS04,chor2004linear}.
In this paper, we will use a variant of it in hypergraphs.
We first provide the definition of VC crown decomposition for ease of reference.

\begin{definition}\label{VC-decomposition}
    A \textit{VC crown decomposition} of a graph $G = (V, E)$ is a pair $(I, J)$ with $I\subseteq V$ and $J\subseteq V$ satisfying the following properties.
    \begin{enumerate}
        \item $I$ is an independent set.
        \item There is no edge between $I$ and $V\setminus(I\cup J)$.
        \item There is an injective mapping (matching) $M: J\rightarrow I$ such that $\{x,M(x)\}\in E$ holds for all $x\in J$.
    \end{enumerate}
\end{definition}



The following lemma is used to find a VC crown decomposition in polynomial time if it exists.

\begin{lemma}[Lemma 9 in \cite{DBLP:conf/iwpec/KumarL16}]\label{q-expansion-finding}
    There exists a polynomial-time algorithm that given a bipartite graph $G :=((A, B), E)$,
    outputs (if it exists) two sets $\emptyset \neq I \subseteq A$ and $J\subseteq B$ with an injective mapping (matching) $M: J\rightarrow I$ such that $(I, J)$ is a VC crown decomposition for $G$.
\end{lemma}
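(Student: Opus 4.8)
The plan is to reduce the task to a single maximum-matching computation followed by an alternating-reachability search, in the spirit of the classical K\"onig-type construction of crown decompositions for \textsc{Vertex Cover}. First I would compute a maximum matching $M$ of $G$, writing $M(v)$ for the $M$-partner of a matched vertex $v$, and then split into two cases according to whether $M$ saturates $A$.

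Suppose $M$ does not saturate $A$. Let $A_0\subseteq A$ be the non-empty set of $M$-unsaturated vertices and run an alternating breadth-first search from $A_0$, traversing non-matching edges from $A$ to $B$ and matching edges from $B$ to $A$; let $I$ and $J$ be the visited vertices of $A$ and of $B$ respectively. I would verify the three properties: $I$ is independent since $G$ is bipartite; every $x\in J$ is $M$-saturated, since otherwise the alternating path reaching $x$ would be $M$-augmenting and contradict the maximality of $M$, and $M(x)$ is visited during the search, so $M(x)\in I$ and $x\mapsto M(x)$ is an injective matching $J\to I$; and if $a\in I$ and $\{a,b\}\in E$, then $b$ is visited, so $b\in J$, which rules out edges between $I$ and $V(G)\setminus(I\cup J)$. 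Hence $(I,J)$ is a VC crown decomposition (with, incidentally, $|I|=|A_0|+|J|>|J|$). In particular, whenever $M$ fails to saturate $A$ a crown exists, and this step exhibits one.

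Now suppose $M$ saturates $A$. The point I expect to be the crux is that any VC crown decomposition $(I,J)$ with $I\neq\emptyset$ must then satisfy $J=N(I)=\{M(a):a\in I\}$: Hall's theorem (applicable because $A$ is saturated) gives $|N(I)|\ge|I|$, Property~2 gives $N(I)\subseteq J$, and the injective matching of Property~3 gives $|J|\le|I|$, so $|I|\le|N(I)|\le|J|\le|I|$ forces $J=N(I)$; and since $\{M(a):a\in I\}\subseteq N(I)$ has the same cardinality $|I|$, equality follows. Therefore it suffices to search for a non-empty $I\subseteq A$ with $N(I)\subseteq\{M(a):a\in I\}$, which then yields the crown $(I,N(I))$ whose matching is the reversal of $M$ restricted to $I$. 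Such sets $I$ are exactly the non-empty subsets of $A$ that contain no vertex with an $M$-unsaturated neighbour and are closed under the relation sending a vertex to the $M$-partners of its matched neighbours. I would thus build the digraph on $A$ with an arc from $a$ to the $M$-partner of $b$ for each matched $b\in N(a)$, flag each vertex having an $M$-unsaturated neighbour as forbidden, and for every non-forbidden $a$ test whether the set $R_a$ of vertices reachable from $a$ is free of forbidden vertices; output $(R_a,N(R_a))$ for the first such $a$, and if no such $a$ exists, report that $G$ has no VC crown decomposition. Completeness is precisely the displayed identity, correctness of any produced pair is the same three-property check as above, and the running time is polynomial (one maximum matching, one breadth-first search, and one reachability query per vertex). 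The delicate point---and the one I would double-check---is the identity $N(I)=\{M(a):a\in I\}$ in the saturated case, since it is what legitimises replacing the search for an arbitrary crown by this simple closed-reachable-set test and hence guarantees that ``no crown exists'' is output only when this is correct.
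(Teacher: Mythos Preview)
The paper does not prove this lemma at all: it is quoted verbatim as ``Lemma~9 in \cite{DBLP:conf/iwpec/KumarL16}'' and used as a black box, so there is no in-paper proof to compare against. Your proposal is therefore not a re-derivation of something the authors wrote, but a standalone proof of the cited result.

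On its merits, your argument is correct. Case~1 (the maximum matching leaves some vertex of $A$ unsaturated) is the classical alternating-reachability construction, and it is essentially identical to what the present paper does prove in Lemma~\ref{strict-LOS-Lemma}, where the hypothesis $|I|\ge|J|+1$ guarantees an unsaturated vertex on the $A$-side and the same K\"onig-type search produces the crown. Case~2 (the matching saturates $A$) is the extra content that Lemma~\ref{q-expansion-finding} needs beyond Lemma~\ref{strict-LOS-Lemma}, and your treatment of it is sound: the chain $|I|\le|N(I)|\le|J|\le|I|$ indeed forces $J=N(I)=\{\mathcal{M}(a):a\in I\}$ for any crown $(I,J)$ once $A$ is saturated, so the existence of a crown is equivalent to the existence of a non-empty subset of $A$ that avoids vertices with unsaturated neighbours and is closed under $a\mapsto \mathcal{M}\text{-partner of }b$ for $b\in N(a)$; reachable sets in your auxiliary digraph are exactly the minimal such closed sets, and a single reachability scan per starting vertex decides the question in polynomial time. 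The point you flag as delicate---that the identity $N(I)=\{\mathcal{M}(a):a\in I\}$ legitimises restricting the search to closed reachable sets---is precisely the right place to be careful, and your justification of it is correct.
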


To apply crown decomposition to \textsc{$d$-Hitting Set}, 
Abu-Khzam~\cite{DBLP:journals/jcss/Abu-Khzam10} introduced a variant of the crown decomposition called HS crown decomposition for hypergraphs,
which is defined below and illustrated in 
 Figure \ref*{dHS}.

\begin{figure}[!t]
    \centering
    \includegraphics[scale=0.5]{./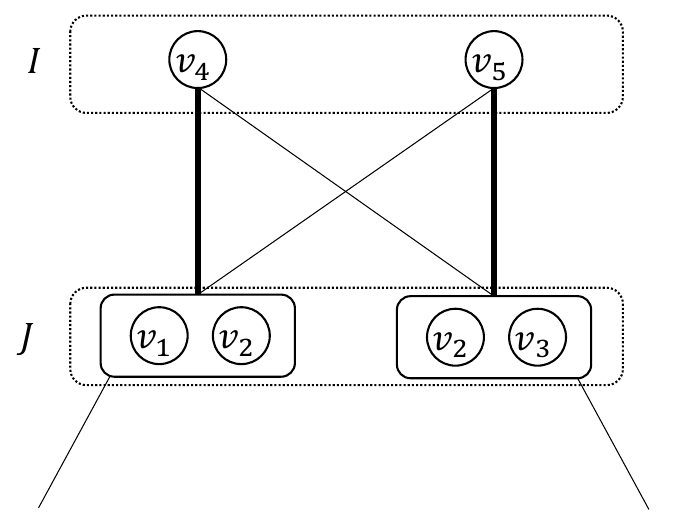}
    \caption{An example for HS crown decomposition $(I, J)$, where $d = 3$. The hypergraph contains hyperedges $\{v_1,v_2,v_4\}$, $\{v_1,v_2,v_5\}$, $\{v_2,v_3,v_4\}$ and $\{v_2,v_3,v_5\}$. $I = \{v_4, v_5\}$ and $J = \{\{v_1, v_2\}, \{v_2, v_3\}\}$. Thick edges correspond to the mapping $M$.}
    \label{dHS}
\end{figure}

\begin{definition}\label{HS crown decomposition}
    An \emph{HS crown decomposition} of a hypergraph $H = (V, E)$ is a pair $(I, J)$ such that:
    \begin{enumerate}
       \item $I\subseteq V$ is an independent set.
        \item $J=\{Y\subseteq V : Y\cup \{x\} \in E \text{ for some } x\in I \}$ is a subedge set.
        
        \item There is an injective mapping (matching) $M$ from $J$ to $I$ in the bipartite graph $(I \cup J, E')$ where $E' = \{(x, Y): x \in I, Y \in J \text{ and } (\{x\} \cup Y) \in E\}$.
    \end{enumerate}
\end{definition}

By the definition of HS crown decomposition $(I, J)$, there is no edge containing one vertex in $I$ and one vertex in $V \setminus (I \cup V(J))$. 
The following lemma shows that we can use HS crown decompositions to reduce the instance.

\begin{lemma}[\cite{DBLP:journals/jcss/Abu-Khzam10}]\label{$d$HS-correctness}
    Let $(H, k)$ be an instance of \textsc{$d$-Hitting Set} and $(I, J)$ be an HS crown decomposition of $H$.
    The instance $(H, k)$ is equivalent to the instance $(H':=(V(H) - I, (E(H) - E(I)) \cup J), k)$.
\end{lemma}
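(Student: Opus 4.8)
The plan is to show the two instances have the same yes/no status by establishing both directions of the equivalence, exploiting the matching $M$ that is built into the HS crown decomposition. Write $I$, $J$, $M$, and $E' = \{(x,Y) : x\in I,\ Y\in J,\ \{x\}\cup Y\in E\}$ as in Definition~\ref{HS crown decomposition}, and let $H' = (V(H)-I,\ (E(H)-E(I))\cup J)$ be the reduced hypergraph. Note first that $H'$ is a legitimate \textsc{$d$-Hitting Set} instance: every hyperedge of $E(H)-E(I)$ has size at most $d$, and every $Y\in J$ satisfies $Y\cup\{x\}\in E(H)$ for some $x\in I$, so $|Y|\le d-1\le d$. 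Also $I$ is disjoint from $V(H)-I$, so no vertex of $I$ appears in $H'$.

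For the forward direction, suppose $S$ is a hitting set of $H$ with $|S|\le k$. I claim $S' := S \setminus I$ is a hitting set of $H'$ with $|S'|\le k$. Every hyperedge of $H'$ is either a hyperedge $e\in E(H)-E(I)$, which contains no vertex of $I$ and is already hit by $S$, hence hit by $S\setminus I = S'$; or it is a subedge $Y\in J$. For such a $Y$, pick $x\in I$ with $\{x\}\cup Y = e \in E(H)$. Since $S$ hits $e$ and $e\cap I \subseteq \{x\}$, either $x\notin S$, in which case $S$ meets $Y$ and so does $S'$, or we need more care; to handle this uniformly it is cleaner to first replace $S$ by a hitting set that avoids $I$ entirely. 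This is exactly where the matching is used: if $S\cap I\ne\emptyset$, for each $x\in S\cap I$ that lies in the image of $M$, say $x=M(Y)$ for some $Y\in J$, swap $x$ out and swap in one vertex of $Y$; every hyperedge incident to $x$ is of the form $\{x\}\cup Y'$ with $Y'\in J$, and since $M$ is injective the matched partners are distinct, so this exchange does not increase $|S|$ and keeps it a hitting set (one checks that every edge through $x$ either contains the newly added vertex or was hit by another element of $S$). Vertices of $S\cap I$ outside the image of $M$ can simply be deleted, since an independent-set vertex whose incident edges are all of the form $\{x\}\cup Y$ — hmm, these still must be hit; so in fact one argues that if $x\in I$ is not matched, no hyperedge forces $x$ into the solution because… I would instead argue directly that an optimal hitting set can be assumed disjoint from $I$ by a single exchange argument along $M$, which is the standard crown-reduction computation and is the step I expect to require the most care.

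For the reverse direction, suppose $S'$ is a hitting set of $H'$ with $|S'|\le k$. Define $S := S' \cup \{\, \text{one vertex of } Y : Y\in J,\ S'\cap Y = \emptyset \,\}$? No — that could blow up the size. Instead use the matching in the other direction: let $T = \{Y \in J : S'\cap Y=\emptyset\}$ be the subedges of $J$ not hit by $S'$, and set $S := (S' \setminus \bigcup_{Y\in T} ?)$… the correct move is $S := S'' := (S' \setminus V(J)) \cup \{M(Y) : Y\in J,\ S' \cap Y = \emptyset\} \cup (S'\cap V(J))$, i.e. keep $S'$ and additionally add $M(Y)$ for each unhit $Y$. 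To control the size, observe that since $J$ is hit "for free" in $H'$ only when $S'$ meets it, and $M$ is injective, a cleaner bookkeeping is: $S := S' \cup \{M(Y) : Y\in J,\ S'\cap Y=\emptyset\}$ and then argue $|S|\le |S'| \le k$ by noting each added $M(Y)$ can be charged against... this size control is the crux, and the honest way to get it is to first pass to a hitting set $S'$ of $H'$ of minimum size and show $|S'| + |\{Y : S'\cap Y=\emptyset\}| \le k$ using that $(I,J)$ with its matching witnesses that $I$ contributes at most $|J|$ vertices "saved." So the key steps in order are: (1) verify $H'$ is a valid instance; (2) prove that $H$ has a size-$\le k$ hitting set iff $H$ has one disjoint from $I$, via the exchange along $M$; (3) show that a hitting set $S$ of $H$ disjoint from $I$ restricts to a hitting set of $H'$ of the same size, and conversely a hitting set $S'$ of $H'$ extends (by adding matched vertices $M(Y)$ for unhit $Y\in J$) to a hitting set of $H$, with the injectivity of $M$ ensuring the size does not exceed $k$. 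The main obstacle is step (2)/(3)'s size accounting: making the exchange argument along the matching $M$ precise so that swapping $I$-vertices for $J$-representatives (and vice versa) is simultaneously hitting-set-preserving and size-non-increasing; the independence of $I$ and the structural fact that every hyperedge through an $I$-vertex is of the form $\{x\}\cup Y$ with $Y\in J$ are exactly what make this go through.
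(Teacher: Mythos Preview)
The paper does not prove this lemma; it is quoted from Abu-Khzam's paper and carries only the citation. So there is no ``paper's own proof'' to compare against, and I can only evaluate your argument on its merits.

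Your overall plan is sound, but the two directions are weighted exactly backwards, and the exchange step you flag as ``requiring the most care'' is in fact wrong as written.

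\textbf{The direction $H'\Rightarrow H$ is immediate and does not use the matching.} If $S'\subseteq V(H)\setminus I$ hits every hyperedge of $H'$, then $S'$ already hits every hyperedge of $H$: edges in $E(H)\setminus E(I)$ are edges of $H'$; and any $e\in E(I)$ satisfies $e\cap I=\{x\}$ for a unique $x$ (independence of $I$), so $e\setminus\{x\}\in J\subseteq E(H')$ is hit by $S'$, hence so is $e$. Your attempt to ``extend $S'$ by adding $M(Y)$ for unhit $Y$'' is unnecessary, and the size bookkeeping you struggle with there is a non-issue.

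\textbf{The direction $H\Rightarrow H'$ is where the matching is used, and your swap is not hitting-set-preserving.} You propose: for each $x\in S\cap I$ with $x=M(Y)$, replace $x$ by some vertex of $Y$. But $x$ may lie in many hyperedges $\{x\}\cup Y'$ with $Y'\neq Y$, and if $S$ hit such an edge only at $x$, the swapped-in vertex of $Y$ need not lie in $Y'$. So the parenthetical ``one checks that every edge through $x$ either contains the newly added vertex or was hit by another element of $S$'' is simply false in general.

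The clean fix is a counting argument via K\"onig. In the bipartite graph on $I\cup J$ with edges $(x,Y)$ whenever $\{x\}\cup Y\in E(H)$, the set $(S\cap I)\cup\{Y\in J: S\cap Y\neq\emptyset\}$ is a vertex cover (because $S$ hits $\{x\}\cup Y$). Since there is a matching saturating $J$, every vertex cover has size at least $|J|$, whence $|S\cap I|\ge |J_0|$ where $J_0=\{Y\in J:S\cap Y=\emptyset\}$. Now $S^\ast:=(S\setminus I)\cup\{v_Y:Y\in J_0\}$ (one vertex per unhit $Y$) has $|S^\ast|\le|S|$ and hits every edge of $H'$; in fact it hits every edge of $H$ as well, which also gives the ``optimal solution disjoint from $I$'' statement you were aiming for.
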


One of the most important steps is to effectively find HS crown decompositions and then we can use Lemma~\ref{$d$HS-correctness} to reduce the instance. If we can find an HS crown decomposition in polynomial time in a smaller hypergraph, this may lead to a smaller kernel. 
To do this, we use LP to help us find HS crown decompositions. Our LP optimal solution highlights the vertices in $I$ with value $0$ and the vertices in $V(J)$ with value $1$. Thus, we can find the HS crown decomposition $(I,J)$.
To do this, we also need to consider a more strict structure,
called the \textit{strict HS crown decomposition}.
An HS crown decomposition $(I, J)$ is \emph{strict} if $|I|\geq |J| + 1$.
Clearly, a strict HS crown decomposition satisfies the definition of an HS crown decomposition.
The following lemma shows how to find a strict HS crown decomposition if an independent set $I$ is given.
\begin{lemma}\label{strict-LOS-Lemma}
    Consider a hypergraph $H = (V, E)$ where the size of each hyperedge in $E$ is at least 2.
    Let $I \subseteq V$ be a non-empty independent set and $J:=\{Y\subseteq V : Y\cup \{x\} \in E \text{ for some } x\in I \}$.
    If $|I| \geq |J| + 1$, then there is a strict HS crown decomposition $(I', J')$ with $\emptyset \neq I' \subseteq I$
    and $J' \subseteq J$ together with an injective mapping (matching) $M$ from $J'$ to $I'$.
    Furthermore, this strict HS crown decomposition can be found in polynomial time if $I$ is given.
\end{lemma}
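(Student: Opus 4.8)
The plan is to reduce the problem to finding a VC crown decomposition in an auxiliary bipartite graph, and then invoke Lemma~\ref{q-expansion-finding}. First I would build the bipartite graph $G := ((A, B), E^\star)$ with $A := I$ and $B := J$, putting an edge between $x \in I$ and $Y \in J$ exactly when $\{x\} \cup Y \in E$; this is precisely the graph $(I \cup J, E')$ appearing in Definition~\ref{HS crown decomposition}. Note that $J$ is well-defined as a \emph{set} of subedges (each hyperedge of size at least $2$ that meets $I$ in at least one vertex contributes its non-$I$ part, and since $I$ is independent a hyperedge meets $I$ in at most one vertex, so $|Y| = |e| - 1 \geq 1$). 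Since the hitting-set instance has all hyperedges of size at least $2$, every $Y \in J$ has at least one neighbor in $I$, so $G$ has no isolated vertices in $B$.

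Next I would run the algorithm of Lemma~\ref{q-expansion-finding} on $G$. The key point to check is that its hypothesis, namely that \emph{some} VC crown decomposition with non-empty $I$-side exists, is guaranteed by the counting assumption $|I| \geq |J| + 1$. This is a standard Hall-type / expansion argument: if no non-empty VC crown decomposition existed in $G$, then (by the deficiency version of Hall's theorem, which underlies Lemma~\ref{q-expansion-finding}) one could find a matching saturating all of $A = I$ into $B = J$, which is impossible since $|A| = |I| > |J| = |B|$. More carefully, I would argue that a maximum matching of $G$ cannot saturate $A$, hence by König/Hall there is a non-empty set $S \subseteq A$ with $|N_G(S)| < |S|$; feeding such a deficient set into the expansion machinery yields a non-empty $I' \subseteq A$ together with $J' = N_G(I')$ and an injective matching $M : J' \to I'$, i.e., exactly a VC crown decomposition $(I', J')$ of $G$ with $I' \neq \emptyset$. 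This is the step I expect to be the main (though routine) obstacle: carefully matching up the deficiency/expansion statement of Lemma~\ref{q-expansion-finding} with the numerical hypothesis $|I| \geq |J|+1$, and making sure the output satisfies $|I'| > |J'|$ rather than just $|I'| \geq |J'|$.

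Finally I would translate the output back to the hypergraph. Set $I' \subseteq I$ and $J'$ as returned, with the matching $M$. Then: (1) $I'$ is independent in $H$ since $I' \subseteq I$ and $I$ is independent; (2) property~2 of a VC crown decomposition says there is no $G$-edge between $I'$ and $B \setminus (I' \cup J') = J \setminus J'$, which, combined with the fact that every hyperedge through a vertex of $I'$ contributes a subedge lying in $J$, shows $J' = \{Y : Y \cup \{x\} \in E \text{ for some } x \in I'\}$ as required by the HS definition; (3) the matching $M : J' \to I'$ is exactly the required injective mapping in the bipartite graph $E'$ restricted to $I' \cup J'$. Hence $(I', J')$ is an HS crown decomposition, and $|I'| = |A \cap (I' \cup J')\text{-side}| > |J'|$ from the expansion construction makes it strict. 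The whole procedure runs in polynomial time because constructing $G$ takes time polynomial in $|V| + |E|$ (the number of subedges arising is at most $|E|$) and Lemma~\ref{q-expansion-finding} is polynomial; this establishes the ``furthermore'' clause. The only subtlety worth a sentence in the write-up is that the HS definition requires $J'$ to be \emph{exactly} the set of contributed subedges for $I'$, so I would be explicit that property~2 of the VC decomposition forces no $Y \in J \setminus J'$ to be adjacent to $I'$, which is what pins down $J'$ rather than merely $J' \subseteq \{\,\cdots\,\}$.
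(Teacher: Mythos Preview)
Your approach is essentially the paper's: build the auxiliary bipartite graph on $I\cup J$ and extract a crown. The one real gap concerns \emph{strictness}. You invoke Lemma~\ref{q-expansion-finding} as a black box, but that lemma only promises some VC crown decomposition $(I',J')$ with an injection $M:J'\to I'$; it does not guarantee $|I'|\geq |J'|+1$. For instance, with $A=\{a_1,a_2,a_3\}$, $B=\{b_1,b_2\}$ and edges $a_1b_1,a_2b_2$, the pair $(\{a_1\},\{b_1\})$ is a valid output of that lemma yet is not strict, even though $|A|>|B|$. Your sentence ``from the expansion construction makes it strict'' is precisely the step that still has to be written out; the black-box call does not supply it.

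The paper closes this gap by doing the crown construction explicitly rather than through Lemma~\ref{q-expansion-finding}: compute a maximum matching $M'$ of the bipartite graph, let $Z$ be the $M'$-unsaturated vertices on the $I$-side (non-empty because $|I|>|J|$), and take $I'',J''$ to be the vertices reachable from $Z$ by $M'$-alternating paths. Standard arguments give $N(I'')=J''$ and that every vertex of $J''$ is $M'$-matched into $I''$; since $Z\subseteq I''$ is non-empty and unmatched, $|I''|\geq |J''|+1$ follows for free. This is exactly the ``deficiency/K\"onig'' argument you allude to in your ``more carefully'' paragraph --- the fix is simply to carry it out directly instead of appealing to the lemma as stated.
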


\begin{proof}
    We construct an auxiliary bipartite graph $G' = (V', E')$ with $V' = (A', B')$ as follows:
    Each vertex in $A'$ corresponds to a vertex in the vertex set $I$, and each vertex in $B'$ corresponds to a subedge in the subedge set $J$.
    A vertex $a \in A'$ is adjacent to a vertex $b \in B'$ if and only if the union of the vertex $a$ in $H$ and the subedge corresponding to $b$ is a hyperedge in $H$.
    Since $|I| \geq |J| + 1$, we have that $|A'|\geq |B'| + 1$.
    Note that $|J| > 0$ since the size of each hyperedge in $E$ is at least 2.
    We compute a maximum matching $M' \subseteq E'$ of $G'$ by using the $O(n^{5/2})$-time matching algorithm~\cite{hopcroft1973n}.

    A vertex is called $M'$-saturated if it is an endpoint of an edge in $M'$.
    A path in $G'$ that alternates between edges not in $M'$ and edges in $M'$ is called an $M'$-alternating path.
    Note that a matching in $G'$ corresponds to a mapping $M$ from $J$ to $I$ in the original hypergraph $H$.
    We use $Z \subseteq A'$ to denote the set of vertices in $A'$ which are not $M'$-saturated.
    Since $|A'| \geq |B'| + 1$, we have that $Z \neq \emptyset$.
    Let $I'' \subseteq A' $ be the set of vertices in $A'$ that are reachable from a vertex in $Z$ via an $M'$-alternating path, possibly of length zero (which means that $Z \subseteq I''$).
    Let $J'' \subseteq B'$ be the set of vertices in $B'$ that are reachable from a vertex in $Z$ via an $M'$-alternating path in $G'$.
    See Figure \ref*{strict-LOS-Lemma_pic} for an illustration.

    Let $I' \subseteq I$ be the vertex set in the hypergraph $H$ corresponding to $I''$, and let $J' \subseteq J$ be the subedge set in the hypergraph $H$ corresponding to $J''$.
    We will show that $(I', J')$ is a strict HS crown decomposition of $H$.

    \begin{figure}[!t]
        \centering
        \includegraphics[scale=0.6]{./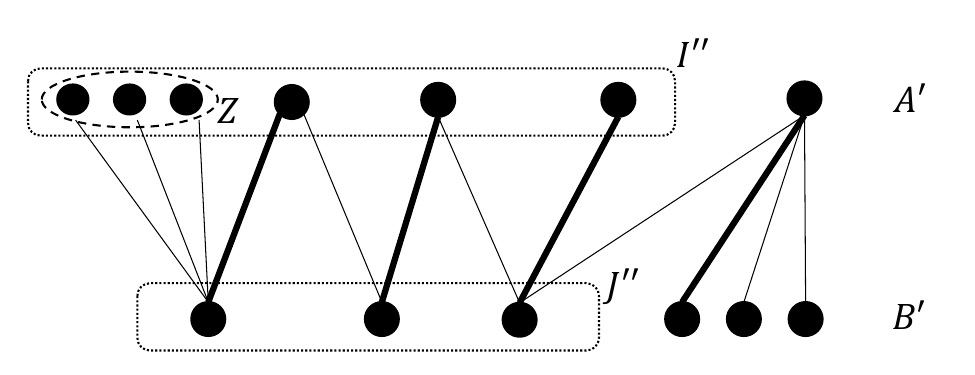}
        \caption{An illustration of $I''$ and $J''$, where thick edges denote the edges in $M'$.}
        \label{strict-LOS-Lemma_pic}
    \end{figure}

    The first condition of HS crown decomposition is satisfied since $I$ is an independent set and $I' \subseteq I$.
    Then, we consider the second condition.
    Consider an edge $(u, v)$ in $G'$ where $u \in I''$.
    If $(u, v) \in M'$, then $v$ must be in $J''$ since $u$ is $M'$-saturated and therefore the last edge of an $M'$-alternating path from a vertex in $Z$ to $u$ must be $(u, v)$.
    If $(u, v) \notin M'$, then $v$ is also in $J''$.
    The reason is that if an $M'$-alternating path $P$ begins from a vertex in $Z$ and ends at the vertex $u \in I''$, 
    then extending $P$ by adding the edge $(u, v)$ at the ending yields another $M'$-alternating path and then $v$ is also reachable from a vertex in $Z$ via an $M'$-alternating path.
    Therefore, we have that $N(I'') \subseteq J''$.
    By the definition of $J''$, each vertex in $J''$ is adjacent to some vertices in $I''$.
    So, we have that $N(I'') = J''$.
    Thus, Let $J_{I'} := \{Y \subseteq V : Y \cup \{x\} \in E \text{ for some }x \in I' \}$. We have that $J_{I'} = J'$, which satisfies the first condition of HS crown decomposition. 

    Next, we consider the third condition.
    We claim that all vertices in $J''$ are $M'$-saturated.
    The reason is that a matching $M'$ is maximum if and only if there is no $M'$-augmenting path (an $M'$-alternating path beginning and ending at non-$M'$-saturated vertices) by Berge's lemma~\cite{berge1957two}.
    Furthermore, we claim that any vertex $v \in A'$ matched to a vertex $u \in J''$ is in $I''$.
    The reason is that if an $M'$-alternating path $P$ begins from a vertex in $Z$ and ends at a vertex $u \in J''$, then extending $P$ by adding the edge $(u, v)$ results in another $M'$-alternating path.
    Thus, there exists a matching $M'' \subseteq M'$ of size $|J''| = |J'| > 0$ with $V (M'') \subseteq J'' \cup I''$. The injective mapping in $G'$ corresponding to $M''$ satisfies the third condition. 
    Since $I''$ contains $|J''|$ vertices matched to $J''$ and at least one vertex in $Z$
    (Since $Z \neq \emptyset$), we have that $I'\neq \emptyset$ and $|I'|\geq |J'| + 1$. The lemma holds. 
\end{proof}

\section{The Algorithm Framework}
Our kernelization algorithm contains six reduction rules. When we introduce one reduction rule, we assume that all previous reduction rules are no longer applicable to the current instance.
As mentioned before, our kernelization algorithm is a refinement of the previous algorithm in \cite{DBLP:journals/jcss/Abu-Khzam10}, we would like to show the difference between these two algorithms first. See Figure~\ref{framework} for an illustration.
Rules 1, 2, 4, and 5 are the same. Rule 3 is simple but newly added in this paper. The major difference and contribution are in Rule 6, where we use a new method to find HS crown decomposition. In fact, the first four rules are simple and basic rules.
We will use ${\tt Reduce}(H, k)$ to denote our kernelization algorithm. Next, we introduce the rules.

\begin{figure}[!t]
    \centering
    \includegraphics[height = 4.2cm]{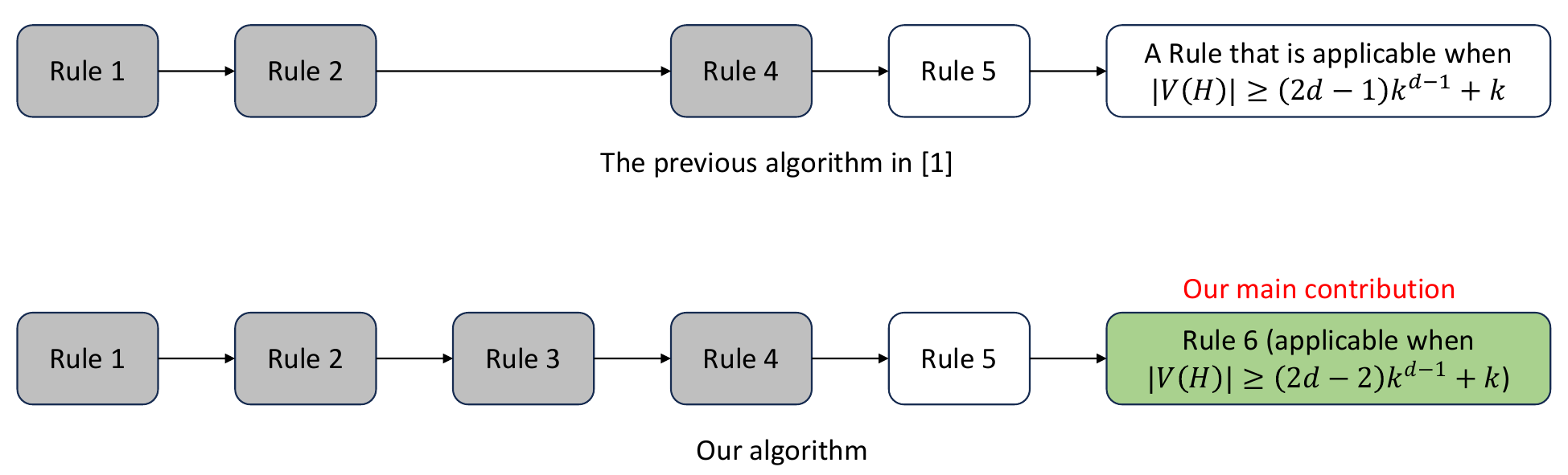}
    \caption{A comparison between the previous algorithm and our algorithm. The rules marked gray are simple and basic rules and Rule 6 marked green is our main contribution.}
    \label{framework}
\end{figure}

\begin{reduction}\label{basic-reduction-1}
    If there exists $x,y\in V$ such that $E(x)\subseteq E(y)$, return ${\tt Reduce}(H':=(V(H) - \{x\}, E(H) - E(x)), k)$.
\end{reduction}

\begin{reduction}\label{basic-reduction-2}
    If there exists $e_1,e_2\in E$ such that $V(e_1)\subseteq V(e_2)$, return ${\tt Reduce}(H':=(V(H), E(H) - \{e_2\}), k)$.
\end{reduction}

\begin{reduction}\label{basic-reduction-3}
    If there exists $e\in E$ such that $|V(e)| = 1$, return ${\tt Reduce}(H':=(V(H) - V(e), E(H) - \{e\}), k - 1)$.
\end{reduction}

The correctness of Reduction Rules 1–\ref{basic-reduction-3} is straightforward.

\begin{reduction}\label{highdegree-reduction}
    If there exists a subedge $e \subseteq V$ such that $|e| = d - 2$ and there are more than $k$ hyperedges whose pairwise intersection is $e$,
    then return ${\tt Reduce}(H':=(V(H), (E(H) - E(e)) \cup \{e\}), k)$.
\end{reduction}


For Reduction Rule \ref{highdegree-reduction}, the correctness follows from the observation that any minimum hitting set must include at least one vertex from the subedge $e$.



We call two hyperedges $e_1$ and $e_2$ are \emph{weakly related} if $|e_1\cap e_2|\leq d - 2$. 
For a subedge $e$, let $W(e)$ denote the set of hyperedges in $W$ that contain~$e$.
The following rule first identifies a maximal collection of weakly related edges $W$ and then reduces the instance based on 
$W$.

\begin{reduction}\label{highoccu-reduction}

    If the most recently applied reduction rule is not Reduction Rule \ref*{highoccu-reduction}, we first use a greedy algorithm to identify an arbitrary maximal set of weakly related edges $W$, and then proceed with the following algorithm:

    \begin{tcolorbox}  %
        \begin{enumerate}
            \item For i = $d - 2$ to 1 do
            \item ~~~~For each $i$-subedge $e$ of $W$:
            \item ~~~~~~~~If $|W(e)| > k^{d - 1 - i}$, then delete  all hyperedges containing $e$ from $E(H)$, add $e$ to $E(H)$. 
        \end{enumerate}

    \end{tcolorbox}
    Let $(H', k')$ be the instance  after running this algorithm (it is possible that $H' = H$ and $k' = k$). Return ${\tt Reduce}(H', k')$.
\end{reduction}

Reduction Rule \ref*{highoccu-reduction} was introduced in~\cite{DBLP:journals/jcss/Abu-Khzam10}. 
Our algorithm may apply Rule 5 for several times, while the algorithm in~\cite{DBLP:journals/jcss/Abu-Khzam10} only applies it for once. 
The following property holds as a direct consequence.
\begin{proposition}\label{size-Prop}
    Consider an instance $(H, k)$ where Reduction Rules 1-\ref*{highoccu-reduction} cannot be applied.
    There exists a maximal collection of weakly related edges $W$ such that any 1-subedge is contained in at most $k^{d - 2}$ hyperedges in $W$.
\end{proposition}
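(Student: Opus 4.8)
The plan is to observe that Proposition~\ref{size-Prop} is essentially a restatement of the termination guarantee of the algorithm inside Reduction Rule~\ref{highoccu-reduction}, applied at the moment the rule can no longer be triggered. First I would fix the instance $(H,k)$ on which Reduction Rules 1--\ref{highoccu-reduction} are all inapplicable, and let $W$ be the maximal set of weakly related edges that the greedy procedure of Rule~\ref{highoccu-reduction} would select (such a $W$ exists because the greedy algorithm is always well-defined; if Rule~\ref{highoccu-reduction} was the most recently applied rule, $W$ is the set carried over, otherwise it is freshly computed). The claim to establish is that for this $W$, every $1$-subedge $e$ satisfies $|W(e)| \le k^{d-2}$, i.e. the $i=1$ case of the bound $|W(e)| \le k^{d-1-i}$.

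The key step is to argue that the inner algorithm of Rule~\ref{highoccu-reduction}, having been run to completion (which is guaranteed since the rule is inapplicable means running it produces $H'=H$, $k'=k$), leaves no $i$-subedge $e$ of $W$ with $|W(e)| > k^{d-1-i}$ for any $i \in \{1,\dots,d-2\}$. Indeed, suppose for contradiction some $1$-subedge $e$ had $|W(e)| > k^{d-2}$. Then when the loop variable $i$ reached $1$, step~3 would have deleted all hyperedges containing $e$ and added the subedge $e$ itself as a new hyperedge — producing $H' \ne H$, so the rule would in fact be applicable, contradicting our assumption. The only subtlety is that $W(e)$ counts hyperedges in $W$ (the weakly related collection), whereas step~3 of the algorithm tests $|W(e)|$ against the threshold; since $W$ is fixed throughout the run of the inner algorithm and the membership of a hyperedge in $W$ is unaffected by deletions outside $W$, I should check that the set $W(e)$ used in the test is exactly the one in the statement. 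This is immediate from the definition $W(e) = \{f \in W : e \subseteq f\}$.

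Thus I would conclude: if Rules 1--\ref{highoccu-reduction} are all inapplicable, then in particular running the inner loop of Rule~\ref{highoccu-reduction} on $(H,k)$ changes nothing, which forces $|W(e)| \le k^{d-1-i}$ for every $i$-subedge $e$ of $W$ and every $1 \le i \le d-2$. Specializing to $i=1$ gives that every $1$-subedge is contained in at most $k^{d-2}$ hyperedges of $W$, which is exactly the assertion of the proposition. The main (minor) obstacle is bookkeeping around the phrase ``it is possible that $H'=H$ and $k'=k$'' in Rule~\ref{highoccu-reduction}: I need to make explicit that inapplicability of the rule is precisely the statement that the inner algorithm is a no-op, and hence that none of its conditional deletions fire — this is the logical hinge of the whole argument and is worth stating cleanly rather than leaving implicit.
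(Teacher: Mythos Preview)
Your proposal is correct and matches the paper's own treatment: the paper simply states that Proposition~\ref{size-Prop} ``holds as a direct consequence'' of Reduction Rule~\ref{highoccu-reduction} without giving a detailed argument, and what you have written is precisely the natural unpacking of that claim. Your observation that inapplicability of Rule~\ref{highoccu-reduction} means the inner loop is a no-op, hence no $i$-subedge $e$ of $W$ can have $|W(e)|>k^{d-1-i}$, is exactly the intended reasoning; the bookkeeping point you flag (that $W$ is fixed throughout the inner loop, so $W(e)$ is unambiguous) is the only thing one might want to say explicitly, and you already do.
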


Now we are ready to present that last reduction rule. 

\begin{reduction}\label{lp-reduction}
    If $|V(H)|\geq (2d - 2)k^{d - 1} + k + 1$, then call the polynomial time algorithm in Lemma \ref*{RRlp-lemma} to compute an HS crown decomposition $(I, J)$.
    Return ${\tt Reduce}(H':=(V(H) - I, (E(H) - E(I)) \cup J), k)$.
    If the algorithm cannot find such an HS crown decomposition, return \textbf{no}.
\end{reduction}

We postpone the presentation of the polynomial-time algorithm for Reduction Rule \ref{lp-reduction} and the correctness of Reduction Rule \ref{lp-reduction} to the next section. For now, assuming the correctness of Reduction Rule \ref{lp-reduction}, we proceed to analyze the kernel size and running time.

In each reduction rule, our kernelization algorithm either calls itself recursively or concludes \textbf{no} to indicate that no solution exists.
When Reduction Rule \ref{lp-reduction} cannot be applied, we have that $|V(H)|\leq (2d - 2)k^{d - 1} + k$.
Thus, the kernel size is at most $ (2d - 2)k^{d - 1} + k$.

Next, we show that our kernelization algorithm runs in polynomial time.
It is easy to see that each execution of Reduction Rules 1-3 and \ref*{highoccu-reduction} takes polynomial time.
For Reduction Rule \ref*{highdegree-reduction}, consider a $(d - 2)$-subedge $e$, and let $E_e$ be the hyperedge set properly containing $e$. 
We construct an auxiliary simple graph $G' = (V', E')$ where $V':= V(E_e)\setminus e$ and $E'(e):=\{e': e\cup e'\in E(H)\}$.
The subedge $e$ is the pair-wise intersection of more than $k$ hyperedges if and only if there exists a matching of size greater than $k$. 
Since finding a maximum matching in a simple graph can be solved in polynomial time~\cite{edmonds1965paths}, 
Reduction Rule \ref*{highdegree-reduction} runs in polynomial time.
The execution of Reduction Rule \ref{lp-reduction} will run a polynomial time algorithm given in Lemma \ref{RRlp-lemma}.
Thus, each execution of all Reduction Rules takes polynomial time.

We further analyze the number of times each reduction rule is applied.
For Reduction Rules 1-\ref*{highdegree-reduction}, after a single execution, the number of hyperedges or the number of vertices decreases by at least one. 
Therefore, Reduction Rules 1-\ref*{highdegree-reduction} are applied at most $|V(H)| + |E(H)|$ times.
For Reduction Rule \ref*{lp-reduction}, after a single execution, either the number of vertices decreases by at least one or the instance is  concluded \textbf{no}.
Therefore, Reduction Rules \ref*{lp-reduction} will be applied at most $|V(H)|$ times.
For Reduction Rule \ref*{highoccu-reduction}, after a single execution, another reduction rule is applied once (if there is no applicable reduction rule, the algorithm exits).
Therefore, Reduction Rule~\ref*{highoccu-reduction} will be applied at most $|V(H)| + 2|E(H)|$ times.
Since the total number of applications of all reduction rules is polynomially bounded, our kernelization algorithm runs in polynomial time.
We have the following theorem.

\begin{theorem}
    \textsc{$d$-Hitting Set} admits a kernel with $(2d - 2)k^{d - 1} + k$ vertices.
\end{theorem}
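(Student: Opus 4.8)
The plan is to obtain the theorem by combining the ingredients already assembled above with the one deferred ingredient, the correctness and polynomial-time implementation of Reduction Rule~\ref{lp-reduction} (Lemma~\ref{RRlp-lemma}, proved in Section~5). First, every reduction rule transforms $(H,k)$ into an equivalent instance: Rules~\ref{basic-reduction-1}--\ref{basic-reduction-3} by the elementary observations stated after them, Rule~\ref{highdegree-reduction} because every minimum hitting set contains a vertex of the subedge $e$, Rule~\ref{highoccu-reduction} by the argument of Abu-Khzam, and Rule~\ref{lp-reduction} by Lemma~\ref{RRlp-lemma}. Since ${\tt Reduce}$ recurses after each change, and since each rule runs in polynomial time and is invoked only polynomially many times (as argued above), ${\tt Reduce}(H,k)$ halts in polynomial time and returns either \textbf{no} or an equivalent instance to which no rule applies; in the latter case Rule~\ref{lp-reduction} is in particular inapplicable, so $|V(H)|\le (2d-2)k^{d-1}+k$, which is the claimed kernel.

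Everything therefore reduces to Lemma~\ref{RRlp-lemma}: for an instance $(H,k)$ on which Rules~\ref{basic-reduction-1}--\ref{highoccu-reduction} cannot be applied and with $|V(H)|\ge (2d-2)k^{d-1}+k+1$, one must in polynomial time either correctly answer \textbf{no} or output an HS crown decomposition of $H$. The plan for this lemma is to solve the linear-programming relaxation of hitting set, namely $\min\sum_{v}x_v$ subject to $\sum_{v\in e}x_v\ge 1$ for all $e\in E$ and $x_v\ge 0$, on the appropriate (sub)hypergraph associated with the post-Rule-\ref{highoccu-reduction} structure. If the optimum exceeds $k$, answer \textbf{no}; this is sound because the LP optimum lower-bounds the size of any hitting set. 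Otherwise, read off from an optimal primal--dual pair the set $I$ of vertices receiving value $0$, and argue, via complementary slackness, that $I$ is independent and that $J:=\{Y\subseteq V : Y\cup\{x\}\in E \text{ for some }x\in I\}$ consists only of subsets of the value-$1$ vertices, so that $V(J)$ lies in the ``heavy'' part of the optimal solution.

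The decisive step — and the main obstacle — is the counting that forces $|I|\ge |J|+1$ whenever $|V(H)|$ exceeds the target bound. Using Proposition~\ref{size-Prop} together with the LP optimum being at most $k$, one obtains $|W|\le k^{d-1}$ and hence $|V(W)|\le d\,k^{d-1}$; using the LP optimum and the refined per-subedge occurrence bounds imposed by Rule~\ref{highoccu-reduction}, one bounds the number of remaining vertices of positive LP-value, the two bounds summing to at most $(2d-2)k^{d-1}+k$. Hence, when $|V(H)|\ge (2d-2)k^{d-1}+k+1$ the value-$0$ set satisfies $|I|\ge |J|+1$, so Lemma~\ref{strict-LOS-Lemma} (whose hypothesis that every hyperedge has size at least $2$ is secured by Rule~\ref{basic-reduction-3}) produces a strict HS crown decomposition $(I',J')$ of $H$ in polynomial time, and Lemma~\ref{$d$HS-correctness} then guarantees that the replacement performed by Rule~\ref{lp-reduction} preserves the answer. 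The delicate part is to fix the right subhypergraph for the LP — so that the value-$0$ vertices are genuinely independent in $H$ — and to verify that $|V(W)|$ plus the positive-value count equals $(2d-2)k^{d-1}+k$ rather than Abu-Khzam's $(2d-1)k^{d-1}+k$; saving that one $k^{d-1}$ term is exactly what the LP-based construction of HS crown decompositions buys over the earlier combinatorial search, and carrying out this accounting carefully is the heart of Section~5.
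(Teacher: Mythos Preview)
Your high-level outline (equivalence of the rules, polynomial running time, and the role of Lemma~\ref{RRlp-lemma}) matches the paper. The substantive divergence, and the gap, is in the LP you propose for Section~5.

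You set up the \emph{standard} fractional hitting-set relaxation, $\min\sum_v x_v$ subject to $\sum_{v\in e}x_v\ge 1$. The paper explicitly does \emph{not} use this LP; it uses what it calls SHS-LP, with the much tighter constraints $\sum_{v\in e}x_v\ge |e|-1$. This is not a cosmetic difference. With your LP, the set $I=\{v:x_v=0\}$ need not be independent: if $u,v\in e$ with $|e|=d$ and $x_u=x_v=0$, the remaining $d-2$ coordinates only have to sum to $1$, which is perfectly feasible. So the step ``$I$ is independent and $V(J)$ lies in the value-$1$ vertices'' fails, and complementary slackness for the standard LP gives you no such guarantee. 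With SHS-LP, on the other hand, $x_u=0$ forces every other vertex of every hyperedge through $u$ to have value exactly $1$, which is precisely why the value-$0$ set is independent and why $J$ sits inside the value-$1$ vertices (this is the content of Lemma~\ref{LP-lemma}).

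A second, related discrepancy: the paper does not use the LP for the counting that yields $|I|\ge|J|+1$. That inequality (Lemma~\ref{strict-key-bound}) is proved combinatorially from a hypothetical minimum hitting set $S$: one builds a good edge set $W$ with $S\subseteq V(W)$, $|W|\le k^{d-1}$, and each $s\in S$ in at most $k^{d-2}$ edges of $W$ (Lemma~\ref{maximal-weakly-subedge-lemma}), and then bounds $|V(W)|\le (d-1)k^{d-1}+k$ and $|J|\le (d-1)k^{d-1}$ separately. The LP enters only afterwards, to \emph{locate} a (minimal strict) HS crown that is already known to exist; the ``no'' answer comes not from an LP bound exceeding $k$ but from the failure to find any VC crown in the auxiliary bipartite graph, which by Lemmas~\ref{strict-key-bound} and~\ref{LP-lemma} can only happen for no-instances. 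Your sketch conflates these two roles and, because it rests on the wrong LP, does not recover either.
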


\section{Reduction Rule \ref*{lp-reduction}}

In this section, we introduce the polynomial-time algorithm used in Reduction Rule \ref{lp-reduction}. For a given instance 
$(H, k)$, we address two subproblems:
(i) Under what conditions can we guarantee the existence of an HS crown decomposition in $H$?
(ii) If such a decomposition exists, under what conditions can we efficiently find it?
The first subproblem is addressed in Section 5.1, while the second is discussed in Section 5.2, where we also prove the correctness of Reduction Rule \ref{lp-reduction}.

\subsection{Existence of HS Crown Decompositions}
In fact, in this subsection, we will show the condition of the existence of strict HS crown decompositions. Once we know there exist strict HS crown decompositions, we will show how to compute HS crown decompositions in the next subsection.

\begin{definition}
    In a hypergraph $H = (V, E)$, a hyperedge subset $W \subseteq E$ is \emph{\good} if there exists a maximal collection of weakly related edges $W'$ such that $W' \subseteq W$.
\end{definition}

Recall that two hyperedges $e_1$ and $e_2$ are weakly related if $|e_1\cap e_2|\leq d - 2$.
Given a {\good} hyperedge subset $W$, we have that every hyperedge in $E$ is either included in $W$ or shares at least $d - 1$ vertices with an element of $W$.

Suppose $(H, k)$ is a \textbf{yes}-instance.
We analyze which conditions can guarantee the existence of an HS crown decomposition.
Since we are only concerned with the existence of an HS crown decomposition, we assume that a minimum solution $S$ is given.
First, we construct a {\good} hyperedge subset $W$ such that $S\subseteq V(W)$, $|W|\leq k^{d - 1}$ and each vertex in $S$ is contained in at most $k^{d - 2}$ hyperedges in $W$.
By the definition of {\good} hyperedge subsets, we have that vertex set $I = V(H)\setminus V(W)$ is an independent set.
Let $J:=\{Y\subseteq V : Y\cup \{x\} \in E \text{ for some } x\in I \}$.
By Lemma \ref*{strict-LOS-Lemma}, if $|I|\geq |J| + 1$, then there exists a strict HS crown decomposition $(I', J')$ where $\emptyset \neq I'\subseteq I$ and $J'\subseteq J$. 
Finally, in Lemma \ref*{strict-key-bound}, we use Lemma \ref*{maximal-weakly-subedge-lemma} to bound the size of $J$ and Lemma \ref*{strict-LOS-Lemma} to show that either an HS crown decomposition exists or the size of $I$ is bounded.

\begin{lemma}\label{maximal-weakly-subedge-lemma}
    Let $(H, k)$ be a \textbf{yes}-instance such that Reduction Rules 1-\ref*{highoccu-reduction} cannot be applied, and let $S$ be a minimum solution to $(H, k)$.
    There exists a {\good} hyperedge subset $W$ such that 
    \begin{enumerate}
        \item $S\subseteq V(W)$;
        \item $|W|\leq k^{d - 1}$;
        \item Each vertex in $S$ is contained in at most $k^{d - 2}$ hyperedges in $W$.
    \end{enumerate}

\end{lemma}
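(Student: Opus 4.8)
## Proof Proposal for Lemma \ref*{maximal-weakly-subedge-lemma}

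The plan is to build the {\good} hyperedge subset $W$ in two stages: first capture a maximal collection of weakly related edges to guarantee goodness, then carefully augment it so that the minimum solution $S$ is covered, all while controlling both the total number of hyperedges and the per-vertex multiplicity. The key tension is that goodness forces us to include an arbitrary maximal weakly related family $W'$, which we do not get to choose for our own convenience, yet we must still ensure $S \subseteq V(W)$ and the two quantitative bounds. I would exploit the fact that Reduction Rules 1--\ref*{highoccu-reduction} are inapplicable — in particular Proposition~\ref*{size-Prop}, which already hands us a maximal weakly related family $W'$ in which every $1$-subedge lies in at most $k^{d-2}$ hyperedges of $W'$.

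First I would take $W'$ to be the maximal collection of weakly related edges supplied by Proposition~\ref*{size-Prop}. Because $W'$ is a collection of \emph{pairwise} weakly related hyperedges (any two share at most $d-2$ vertices), a counting argument bounds $|W'|$: pick one vertex from each hyperedge greedily, or more robustly, observe that if $|W'| > k$ then since the hyperedges of a weakly related family behave like a ``sunflower-free'' collection one can extract a large packing-type structure; but the cleaner route is that the reduction rules (especially Rule~\ref*{highdegree-reduction} on $(d-2)$-subedges and Rule~\ref*{highoccu-reduction}) bound the number of hyperedges sharing each fixed $i$-subedge by $k^{d-1-i}$, and iterating this from $i = d-2$ down to $i=1$ and then over the at most $dk$ vertices that a size-$k$ solution touches within $W'$ yields $|W'| \le k^{d-1}$. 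I expect this is essentially the combinatorial heart of the classical $(2d-1)k^{d-1}$ analysis and can be reproduced with the inapplicability of Rules 1--5.

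Next I would augment $W'$ to a set $W$ covering $S$. For each vertex $v \in S$ not already in $V(W')$, since $W'$ is \emph{maximal}, every hyperedge of $E(v)$ shares at least $d-1$ vertices with some hyperedge of $W'$; pick one such hyperedge $e_v \in E(v)$ and add it to $W$. This adds at most $|S| \le k$ hyperedges, keeping $|W| \le k^{d-1} + k$ — here I would need to sharpen the bound on $|W'|$ slightly (to $k^{d-1} - k$, which should follow because the bound $k^{d-1}$ counts with slack, or by absorbing the $+k$ into the iteration) so that the final count is $|W| \le k^{d-1}$. The third condition — each vertex of $S$ in at most $k^{d-2}$ hyperedges of $W$ — follows by the same per-$1$-subedge bound from Proposition~\ref*{size-Prop} applied to $W'$, plus the observation that each augmenting step adds only a bounded number of incidences at each vertex of $S$; one must check the augmentation does not overshoot this, which may require choosing the $e_v$'s so as not to pile up on any single vertex, using Rule~\ref*{highdegree-reduction}-type bounds again.

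The main obstacle I anticipate is the simultaneous satisfaction of conditions 2 and 3 after the augmentation step: adding hyperedges to cover stray vertices of $S$ can, in principle, inflate the incidence count at other vertices of $S$, and it is not automatic that the arbitrary maximal family from Proposition~\ref*{size-Prop} leaves enough ``room.'' I would handle this by being more deliberate about which maximal weakly related family to start from — not an arbitrary one, but one chosen (greedily, prioritizing hyperedges through $S$) so that $S$ is already almost entirely inside $V(W')$, making the augmentation essentially vacuous. The secondary difficulty is pinning down the exact constant $k^{d-1}$ rather than $O(k^{d-1})$; this forces a careful bookkeeping in the iterated application of the $k^{d-1-i}$ bounds across subedge sizes $i = 1, \dots, d-2$, which is where I would spend most of the detailed effort.
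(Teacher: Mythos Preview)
Your high-level plan---start with the maximal weakly related family $W'$ supplied by Proposition~\ref*{size-Prop} and then augment it to cover $S$---matches the paper, but two concrete steps are missing, and the difficulties you flag at the end are precisely the symptoms of those missing steps.

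First, the bound $|W'|\le k^{d-1}$ is not obtained by iterating the $k^{d-1-i}$ bounds across subedge sizes; it follows in one line from something you never invoke: $S$ is a \emph{hitting set}, so every hyperedge of $W'$ contains a vertex of $S$, and since each vertex of $S$ lies in at most $k^{d-2}$ hyperedges of $W'$ (Proposition~\ref*{size-Prop}), you get $|W'|\le |S|\cdot k^{d-2}\le k^{d-1}$ directly. This also makes your ``sharpen to $k^{d-1}-k$'' manoeuvre unnecessary: when $S\setminus V(W')\neq\emptyset$, only $S_1:=S\cap V(W')$ hits $W'$, and $|S_1|\le k-1$, so already $|W'|\le (k-1)k^{d-2}$, leaving room for up to $k^{d-2}\ge k$ extra edges.

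Second, and more seriously, your augmentation step picks an arbitrary $e_v\in E(v)$ for each uncovered $v\in S$, and then you worry (correctly) that this may pile incidences onto other solution vertices. The fix is to use the one hypothesis you never touch: $S$ is a \emph{minimum} solution. Minimality guarantees that for each $v\in S_2:=S\setminus V(W')$ there is a hyperedge $e_v$ with $e_v\cap S=\{v\}$ (otherwise $S\setminus\{v\}$ would still be a hitting set). Choosing these particular $e_v$ means the added edges touch no vertex of $S_1$ at all, so condition~3 for $S_1$ is inherited from $W'$; and each vertex of $S_2$, being outside $V(W')$, lies only in the $\le |S_2|\le k\le k^{d-2}$ added edges. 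Your proposed workaround---rechoosing $W'$ greedily through $S$---is not available, since Proposition~\ref*{size-Prop} only certifies the per-vertex bound for the specific $W'$ produced by the reduction rules.
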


\begin{proof}
    After Reduction Rule \ref*{highoccu-reduction}, we have that there exists a maximal collection of weakly related edges $W'$ such that any 1-subedge is contained in at most $k^{d - 2}$ hyperedges in $W'$ by Proposition \ref*{size-Prop}.
    Since $(H, k)$ is a \textbf{yes}-instance, we have that $|S|\leq k$.
    We consider the following two cases.
    
    Case 1. $S\subseteq V(W')$: 
    By Proposition \ref*{size-Prop}, since vertices in $S$ can be treated as 1-subedges, we have that each vertex in $S$ is contained in at most $k^{d - 2}$ hyperedges in $W'$.
    Since $S$ hits all hyperedges in $W'$,
    we have that $|W'|\leq k^{d - 2}|S|\leq k^{d - 1}$.
    Clearly, every maximal collection of weakly related edges is \good.
    Thus, $W'$ is a {\good} hyperedge subset satisfying the three conditions.

    Case 2. $S\nsubseteq V(W')$: 
    Let $S_1 = V(W')\cap S$ and $S_2 = S\setminus S_1$.
    Since $S$ is a minimum solution, we have that for each vertex $v\in S_2$, there exists a hyperedge $e$ such that $e\cap S = \{v\}$.
    By the definition of $S_2$, we have that such $e$ is not in $W'$.
    Now, we construct a {\good} hyperedge subset $W$ by adding an arbitrary hyperedge $e$ containing $v$ with $e\cap S = \{v\}$ for each vertex $v$ in $S_2$.
    Clearly, it holds that $S\subseteq W$.
    Next, we consider the size of $W$.
    By similar arguments, each vertex in $S_1$ is contained in at most $k^{d - 2}$ hyperedges in $W'$.
    Since $S_1$ hits all hyperedges in $W'$ and $|S_1|<|S|\leq k$, we have that $|W'|\leq (k - 1)k^{d - 2}$.
    Thus, we have that $|W|\leq |W'| + |S_2| \leq k^{d - 1}$, satisfying the second condition.
    Finally, we further consider the last condition.
    For each vertex $v$ in $S_1$, since the hyperedges in $W\setminus W'$ only contain solution vertices from $S_2$ by our construction, 
    we have that $v$ is contained in at most $k^{d - 2}$ hyperedges in $W$.
    For each vertex $v$ in $S_2$, since $v\notin W'$, $v$ is contained in at most $|W\setminus W'| = |S_2| \leq k$ hyperedges in $W$.
    Thus, the last condition holds.
    Since each superset of a maximal collection of weakly related edges is also \good, it holds that $W$ is a {\good} hyperedge subset satisfying the three conditions.

    Thus, this lemma holds.
\end{proof}

By Lemma \ref*{maximal-weakly-subedge-lemma} and Lemma \ref*{strict-LOS-Lemma}, we have the following lemma to show that if the size of the vertex set of the hypergraph is big enough, there exists a strict HS crown decomposition.
This size condition is strongly related to our kernel size.

\begin{lemma}\label{strict-key-bound}
    Let $(H, k)$ be a \textbf{yes}-instance where Reduction Rules 1-\ref{highoccu-reduction} cannot be applied.
        If $|V(H)|\geq (2d - 2)k^{d - 1} + k + 1$, then there exists a strict HS crown decomposition in $H$.

\end{lemma}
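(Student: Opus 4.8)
The plan is to manufacture, from a minimum solution, an independent set $I$ of exactly the shape required by Lemma~\ref{strict-LOS-Lemma}, and then to verify the single inequality $|I|\ge |J|+1$; Lemma~\ref{strict-LOS-Lemma} then delivers a strict HS crown decomposition directly.

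First I would fix a minimum solution $S$ of $(H,k)$, which exists with $|S|\le k$ because $(H,k)$ is a \textbf{yes}-instance, and apply Lemma~\ref{maximal-weakly-subedge-lemma} to obtain a \good\ hyperedge subset $W$ with $S\subseteq V(W)$, $|W|\le k^{d-1}$, and every vertex of $S$ contained in at most $k^{d-2}$ hyperedges of $W$. I then set $I:=V(H)\setminus V(W)$; since $W$ is \good, every hyperedge has at most one vertex outside $V(W)$, so $I$ is an independent set, and I let $J:=\{Y\subseteq V: Y\cup\{x\}\in E \text{ for some }x\in I\}$ be the associated subedge set as in Lemma~\ref{strict-LOS-Lemma}. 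It then suffices to prove $|V(W)|+|J|\le(2d-2)k^{d-1}+k$, since this yields $|I|=|V(H)|-|V(W)|\ge(2d-2)k^{d-1}+k+1-|V(W)|\ge|J|+1$.

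To bound $|V(W)|$ I would use that $S$ hits every hyperedge of $W$, so each such hyperedge has at most $d-1$ vertices outside $S$, giving $|V(W)|\le|S|+(d-1)|W|\le k+(d-1)k^{d-1}$. To bound $|J|$ I would first note that every $Y\in J$ has the form $Y=e\setminus\{x\}$ for some hyperedge $e\in E$ and $x\in e\cap I$; since $x\notin V(W)$ the hyperedge $e$ is not in $W$, so, $W$ being \good, $e$ has at least $d-1$ vertices in some $e'\in W$ and also contains $x\notin e'$, which forces $|e|=d$, hence $|Y|=d-1$ and $Y=e\cap e'\subseteq e'$; moreover $S$ hits $e$ while $x\notin S$ (as $S\subseteq V(W)$), so $Y\cap S\neq\emptyset$. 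Then I would charge each $Y\in J$ to some $s\in Y\cap S$ and to the hyperedge $e'\in W$ with $Y\subseteq e'$: then $s\in e'$, there are at most $k^{d-2}$ hyperedges of $W$ through $s$, and each has at most $d-1$ of its $(d-1)$-subsets through $s$, so $|J|\le|S|\cdot(d-1)k^{d-2}\le(d-1)k^{d-1}$.

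Adding the two estimates gives $|V(W)|+|J|\le(2d-2)k^{d-1}+k$, so $|I|\ge|J|+1\ge1$; in particular $I\neq\emptyset$, and since every hyperedge has size at least $2$ (Reduction Rule~\ref{basic-reduction-3} is inapplicable, and $(H,k)$ being a \textbf{yes}-instance rules out empty hyperedges), Lemma~\ref{strict-LOS-Lemma} applies and yields the required strict HS crown decomposition. I expect the delicate part to be the $|J|$ bound, specifically the claim that every member of $J$ is a $(d-1)$-subedge sitting inside a hyperedge of $W$ and meeting $S$, which is exactly what allows property~3 of Lemma~\ref{maximal-weakly-subedge-lemma} to control the count; the rest is bookkeeping, and since both estimates are individually tight they together pin down the $(2d-2)k^{d-1}+k$ bound.
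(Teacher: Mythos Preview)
Your proof is correct and follows essentially the same approach as the paper: both fix a minimum solution $S$, invoke Lemma~\ref{maximal-weakly-subedge-lemma} to get $W$, set $I=V\setminus V(W)$, bound $|V(W)|\le k+(d-1)k^{d-1}$ and $|J|\le(d-1)k^{d-1}$, and finish with Lemma~\ref{strict-LOS-Lemma}. The only difference is cosmetic: the paper bounds $|J|$ by exhibiting an explicit superset $J_0$ (built by pseudocode iterating over $s\in S$ and $e\in E(s)\cap W$), whereas you reach the same bound via the equivalent charging argument $Y\mapsto(s,e')$; both hinge on the identical observation that every $Y\in J$ is a $(d-1)$-subset of some $e'\in W$ and meets $S$.
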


\begin{proof}

    For any minimum solution $S$,
    Lemma \ref*{maximal-weakly-subedge-lemma}, guarantees the existence of a {\good} hyperedge subset $W$ such that $S\subseteq V(W)$, $|W|\leq k^{d - 1}$ and each vertex in $S$ is contained in at most $k^{d - 2}$ hyperedges in $W$.
    For any hyperedge $e\in E(H)$, we have that $|e \cap S|\geq 1$ since $S$ is a feasible solution.
    We construct a subedge set $J_0$ by the following method.

    \begin{tcolorbox}  
        \begin{enumerate}
            \item For each vertex $s$ in $S$:
            \item ~~~~For each hyperedge $e$ in $E(s)\cap W$:
            \item ~~~~~~~~If $|e| = d$:
            \item ~~~~~~~~~~~~Let $ e = \{s, v_1, v_2, \cdots, v_{d - 1}\}$.
            \item ~~~~~~~~~~~~For each $i$ from $1$ to $d - 1$:
            \item ~~~~~~~~~~~~~~~~Add $e\setminus \{v_i\}$ to $J_0$.
            \item ~~~~~~~~Else if $|e| = d - 1$:
            \item ~~~~~~~~~~~~Add $e$ to $J_0$.
        \end{enumerate}

    \end{tcolorbox}

    Since each vertex in $S$ is contained in at most $k^{d - 2}$ hyperedges in $W$, 
    we have that $|J_0|\leq (d - 1)k^{d - 1}$. 
    Let $I = V\setminus V(W)$, let $J:=\{Y\subseteq V : Y\cup \{x\} \in E \text{ for some } x\in I \}$. 
    To bound the size of $J$, we show that $J\subseteq J_0$.

    By the definition of $W$, there exists a maximal collection of weakly related edges $W'$ such that $W'\subseteq W$.
    Clearly, by the definition of $W'$, $I' = V\setminus V(W')$ is an independent set. 
    For each hyperedge $e$ containing a vertex $v_e$ in $I'$, we claim that $|e| = d$.
    The reason is that if $|e|\leq d - 1$, then $e$ shares at most $d - 2$ vertices with each element of $W'$ since $v_e$ is not contained in any hyperedge in $W'$.
    Thus, $W'\cup \{e\}$ is a collection of weakly related edges, which contradicts the maximality of $W'$.

    Now we consider the vertex set $I$.
    Since $I\subseteq I'$, it holds that $I = V\setminus V(W)$ is an independent set.
    Moreover, for each hyperedge $e$ containing one vertex $v_e$ in $I$, we have $|e| = d$.
    We further have that for each subedge $e'\in J$, $|e'| = d - 1$.
    Since $S$ is a feasible solution, there is no hyperedge containing one vertex in $I$ and other vertices not in $S$.
    So we have that $J\subseteq J_0$, which implies that $|J|\leq |J_0|$.

    Since $S$ is a feasible solution, for any hyperedge $e \in W$, we have that $|e\setminus S| \leq d - 1$.
    Therefore, we have that $|V(W)|\leq k + (d - 1)|W| \leq k + (d - 1)k^{d - 1}$.

    Suppose that $(H, k)$ is a \textbf{yes}-instance and $|V(H)|\geq (2d - 2)k^{d - 1} + k + 1$.
    Since $|V(W)|\leq (d - 1)k^{d - 1} + k$, we have that $|I|\geq (d - 1)k^{d - 1} + 1$.
    Since $|J|\leq |J_0|\leq (d - 1)k^{d - 1}$, we have that $|I|\geq |J| + 1$.
    By Lemma \ref*{strict-LOS-Lemma}, there exists a strict HS crown decomposition in $H$.
    This lemma holds.
\end{proof}

\subsection{Computing HS Crown Decompositions}
Next, we consider how to efficiently compute an HS crown decomposition, assuming there exists a strict HS crown decomposition.
We will use an LP method to solve this problem.

We encode the given hypergraph $H$ as an LP model (SHS-LP):
\[
    \begin{split}
        \mbox{Min } &\sum_{v\in V} x_v\\  
           \mbox{subject to } & \sum_{v'\in e} x_{v'}\geq |e|-1, \text{ for each } e\in E\\     
        & 0\leq x_v \leq 1,\quad \forall v\in V.
    \end{split}
\]

Note that SHS-LP is not an LP relaxation of the integer programming model for the \textsc{$d$-Hitting Set} problem.
For \textsc{$d$-Hitting Set}, we only need to require that 
\[\sum_{v'\in e} x_{v'}\geq1, \text{ for each } e\in E.\]
In our algorithm, we indeed need to use SHS-LP.

We use $S_L$ to denote an optimal solution to SHS-LP with the optimal value being $L$.
For a vertex set $X \subseteq V(H)$, we define $X = 1$ to indicate that each variable corresponding to the vertices in $X$ is set to $1$. 
For a subedge $e$, its value is defined as $\sum_{v\in e} x_v$.
For a hyperedge $e$, if there is a vertex $v$ in $e$ such that $x_v = 0$, then we have that for each $u\in e\setminus \{v\}$, $x_u = 1$.


We call a strict HS crown decomposition $(I, J)$ \emph{minimal} if there is no strict HS crown decomposition $(I', J')$ such that $I'\subset I$.
For any minimal strict HS crown decomposition, the following lemma shows that the LP model SHS-LP  will highlight the vertices in $I$ with value $0$ and the vertices in $V(J)$ with value $1$.

\begin{lemma}\label{LP-lemma}
    Suppose that $(H, k)$ is a \textbf{yes}-instance of \textsc{$d$-Hitting Set} where Reduction Rules 1-\ref*{highoccu-reduction} cannot be applied.
    Let $(I, J)$ be a minimal strict HS crown decomposition, and let $S_L$ be an optimal LP solution.
    $S_L$ sets all variables corresponding to vertices in $I$ to $0$ and sets all variables corresponding to vertices in $V(J)$ to $1$.
\end{lemma}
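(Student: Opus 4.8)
The plan is to argue by contradiction via an exchange argument: assume $S_L$ does not have the claimed structure, and then modify $S_L$ into a feasible LP solution of strictly smaller value, or else into one of the same value that certifies a smaller strict HS crown decomposition, contradicting either optimality of $S_L$ or minimality of $(I,J)$. The first step is to record what feasibility of SHS-LP forces on the relevant vertices. Since $(I,J)$ is an HS crown decomposition, every hyperedge $e$ meeting $I$ has the form $e = \{x\}\cup Y$ with $x\in I$ and $Y\in J$, and $I$ is independent, so the only SHS-LP constraints touching a vertex of $I$ are of this type. For such a constraint, $\sum_{v\in e} x_v \ge |e|-1$ means at most one vertex of $e$ can fall below value $1$; so if some $x\in I$ has $x_x < 1$, then every vertex of the corresponding $Y\in J$ already has value $1$ in $S_L$.

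The core step is the exchange. Let $I_0 = \{x\in I : (S_L)_x > 0\}$ and suppose $I_0 \ne \emptyset$; I want to derive a contradiction. Consider pushing the values of vertices in $I_0$ down toward $0$ while compensating on $V(J)$. Concretely, I would use the matching $M: J \to I$ from Definition~\ref{HS crown decomposition}: for the submatching saturating the vertices of $J$ that are incident to $I_0$, lowering $(S_L)_x$ on the $I$-side can only violate the constraint $\{x\}\cup Y\in E$ if $Y$ is not already ``full'', but by the feasibility observation above any $x\in I$ with value $<1$ forces its $Y$ full; and for $x\in I_0$ with value exactly... I need to handle the case $(S_L)_x = 1$ too, so the cleanest route is: for each $Y\in J$ set $x_v := 1$ for all $v\in Y$ (this only increases the objective on $V(J)$ by a bounded amount, and keeps feasibility), then set $x_x := 0$ for all $x\in I$. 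After the first change, every constraint $\{x\}\cup Y\in E$ with $x\in I$ reads $\sum_{v\in Y} 1 + x_x = (|e|-1) + x_x \ge |e|-1$, so it stays feasible even with $x_x = 0$; constraints not touching $I$ are unaffected by lowering $I$ and only helped by raising $V(J)$. The change in objective is $\Delta = \sum_{v\in V(J)}(1 - (S_L)_v) - \sum_{x\in I}(S_L)_x$. Using the matching $M$ to pair each $Y\in J$ with a distinct $M(Y)\in I$, and that $|I|\ge |J|+1$ for a strict decomposition, plus the fact (again from feasibility on $\{M(Y)\}\cup Y$) that $(S_L)_{M(Y)} + \sum_{v\in Y}(S_L)_v \ge |Y|$, so $\sum_{v\in Y}(1-(S_L)_v) \le 1 - (S_L)_{M(Y)} \le (S_L)_{M(Y)} $ only when $(S_L)_{M(Y)} \ge 1/2$ — here I may instead need the sharper bound $\sum_{v\in Y}(1-(S_L)_v) \le (S_L)_{M(Y)}$ directly, which follows since $\sum_{v\in Y}(S_L)_v \ge |Y| - (S_L)_{M(Y)} \ge |Y| - 1 + (1-(S_L)_{M(Y)})$. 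Summing over $Y\in J$ this telescopes to show $\sum_{v\in V(J)}(1-(S_L)_v) \le \sum_{x\in M(J)}(S_L)_x \le \sum_{x\in I}(S_L)_x$, hence $\Delta \le 0$. If $\Delta < 0$ we contradict optimality of $S_L$; if $\Delta = 0$, then the new solution is also optimal, it sets all of $I$ to $0$ and all of $V(J)$ to $1$ — but then, re-examining, the original $S_L$ must already have had $(S_L)_v = 1$ on $V(J)$ and, for the inequalities to be tight, $(S_L)_x = 0$ for $x\in I\setminus M(J)$ and the matched part forced as well; a short argument using minimality of $(I,J)$ rules out any vertex of $I$ keeping positive value.

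The main obstacle I anticipate is the $\Delta = 0$ boundary case: there the exchange is value-neutral, so I cannot conclude directly from optimality of $S_L$ alone, and I have to bring in the minimality of the crown decomposition $(I,J)$. The likely resolution: if some $x\in I$ has $(S_L)_x > 0$, look at the set $I^* = \{x\in I : (S_L)_x = 0\}$ and the corresponding $J^* = \{Y\in J : \exists x\in I^*, \{x\}\cup Y\in E\}$; the feasibility tightness forces $V(J^*)$ to have value $1$ and one checks via the matching restricted to $I^*$ that $(I^*, J^*)$ is itself a strict HS crown decomposition (the expansion/alternating-path structure from the proof of Lemma~\ref{strict-LOS-Lemma} is what guarantees $|I^*| \ge |J^*| + 1$ survives the restriction) with $I^* \subsetneq I$, contradicting minimality. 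So the two ingredients I must be careful to get right are (a) that lowering $I$-values to $0$ never breaks a constraint — clean, since $I$ is independent and the only constraints are $\{x\}\cup Y$ with $Y$ forced full — and (b) the telescoping/matching inequality that makes $\Delta \le 0$, which hinges on the injectivity of $M$ and on $|I| \ge |J| + 1$.
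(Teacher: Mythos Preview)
Your global exchange argument and the telescoping bound
\[
\sum_{v\in V(J)}(1-x_v)\;\le\;\sum_{Y\in J}\sum_{v\in Y}(1-x_v)\;\le\;\sum_{Y\in J}x_{M(Y)}\;=\;\sum_{x\in M(J)}x_x\;\le\;\sum_{x\in I}x_x
\]
are correct, and they do give $\Delta\le 0$. The genuine gap is in the $\Delta=0$ case. Your first claim there, that ``the original $S_L$ must already have had $(S_L)_v=1$ on $V(J)$'', does not follow from $\Delta=0$: equality in the first inequality only says each $v\in V(J)$ with $x_v<1$ lies in a \emph{single} $Y$, not that no such $v$ exists. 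Your fallback to $I^*=\{x\in I:x_x=0\}$ and $J^*=\{Y:\exists x\in I^*,\ \{x\}\cup Y\in E\}$ is the right move, but the justification you offer for $|I^*|\ge|J^*|+1$ (``the alternating-path structure from Lemma~\ref{strict-LOS-Lemma}'') is not what delivers it; Lemma~\ref{strict-LOS-Lemma} needs that inequality as a \emph{hypothesis}, it does not produce it. The actual reason is hidden in the equality case of your own chain: when $\Delta=0$ every matching constraint $\{M(Y)\}\cup Y$ is tight, so for $Y\in J^*$ (where all of $Y$ is already at value~$1$) tightness forces $x_{M(Y)}=0$, i.e.\ $M(J^*)\subseteq I^*$; combined with $I\setminus M(J)\subseteq I^*$ (from the last equality) this gives $I\setminus I^*\subseteq M(J\setminus J^*)$, hence $|I\setminus I^*|\le|J\setminus J^*|$ and therefore $|I^*|\ge|J^*|+1$. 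Only then can Lemma~\ref{strict-LOS-Lemma} be invoked on $I^*$ to contradict minimality.

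The paper sidesteps this subtlety by doing the case split up front rather than after a global exchange. With $I_1=I^*$, $I_2=I\setminus I_1$, $J_1=J^*$, $J_2=J\setminus J_1$, it simply asks whether $|I_1|\ge|J_1|+1$ (then Lemma~\ref{strict-LOS-Lemma} on $I_1$ contradicts minimality) or $|I_1|\le|J_1|$ (then $|I_2|\ge|J_2|+1$, and a local $\epsilon$-exchange, decreasing each $x\in I_2$ by $\epsilon$ and distributing $\epsilon$ into each $Y\in J_2$, strictly lowers the LP value). Your global exchange is morally the same mechanism but collapses the two cases; the price is that the boundary $\Delta=0$ forces you back into the paper's first case anyway, and you must extract $|I^*|\ge|J^*|+1$ from the tightness conditions rather than getting it for free. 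Either route works once that step is done explicitly. (A minor aside: your early remark that $x_x<1$ forces the adjacent $Y$ to have all coordinates equal to~$1$ is only correct for $x_x=0$; for $0<x_x<1$ the constraint gives $\sum_{v\in Y}x_v>|Y|-1$, not equality. You don't actually use the stronger claim later, so this is harmless.)
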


\begin{proof}
    For any optimum solution $S_L$, let $I_1\subseteq I$ be the set of all vertices in $I$ whose variables are set to $0$ by $S_L$, and let $I_2 = I\setminus I_1$.
    Let $J_1 := \{Y\subseteq V : Y\cup \{x\} \in E \text{ for some } x\in I_1 \}$, let $J_2 = J\setminus J_1$.
    For any hyperedge $\{v_1, v_2, \cdots, v_l\}$, the constraint $x_{v_1} + x_{v_2} + \cdots + x_{v_l} \geq (l - 1)$ implies that if $x_{v_1} = 0$, then for each $i = 2, 3, \cdots, l$, $x_{v_i} = 1$.
    Thus, by the definition of $I_1$, we have that for each vertex $v\in V(J_1)$, $x_v = 1$.
    We claim $I_2 = \emptyset$ to conclude the proof.
    By contradiction, suppose $I_2 \neq \emptyset$.

    If $|I_1|\geq |J_1| + 1$, by Lemma \ref*{strict-LOS-Lemma}, there exists a strict HS crown decomposition $(I', J')$ with $I' \subseteq I_1$ and $J' \subseteq J_1$ together with an injective mapping (matching) $M$ from $J'$ to $I'$, which contradicts the minimality of the strict HS crown decomposition $(I, J)$.

    If $|I_1|\leq |J_1|$, then $|I_2|\geq |J_2| + 1$.
    Let $\epsilon = \min\{x_v|v\in I_2\}$.
    Now we construct a new LP solution $S_{L'}$ by the following method.
    See Figure \ref*{RPcase2} for an illustration.

    \begin{tcolorbox}  
        \begin{enumerate}
            \item For each vertex $u$ in $I_2$:
            \item ~~~~$x_u\leftarrow x_u - \epsilon$.
            \item For each subedge $e$ in $J_2$:
            \item ~~~~Let $y_1$ be $\epsilon$.
            \item ~~~~~~~~For each vertex $u$ in $e$:
            \item ~~~~~~~~~~~~Let $y_2$ be $\min\{1 - x_u, y_1\}$.
            \item ~~~~~~~~~~~~$x_u\leftarrow x_u + y_1$.
            \item ~~~~~~~~~~~~$y_1\leftarrow y_1 - y_2$.
        \end{enumerate}

    \end{tcolorbox}

    \begin{figure}[!t]
        \centering
        \includegraphics[height=4.5cm]{./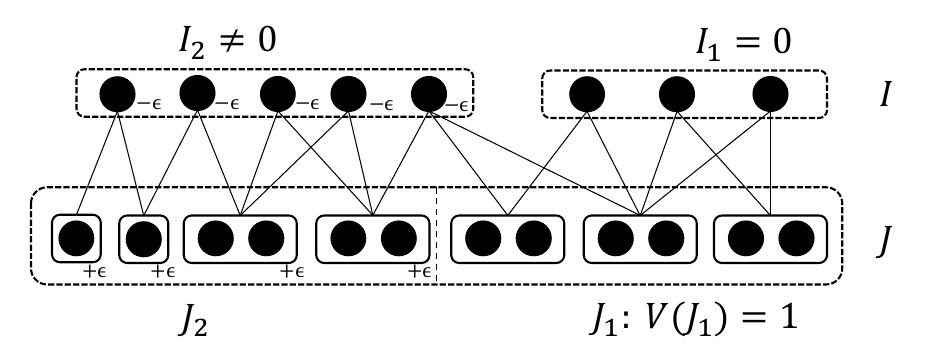}
        \caption{The case that $|I_1|\leq |J_1|$ in the proof of Lemma \ref*{LP-lemma}, where $d = 3$.}
        \label{RPcase2}
    \end{figure}

    This construction method for $S_{L'}$ ensures that for each subedge $e$ in $J_2$, either the value of $e$ increases by $\epsilon$ or the value of each vertex in $e$ is equal to 1.
    Now, we show that $S_{L'}$ is a feasible LP solution. 
    By the definitions of $I$ and $J$, we have that for every hyperedge $e$ containing a vertex $v$ in $I_2$, $e$ must contain one subedge $e\setminus \{v\}$ in $J$.
    Recall that the value of each vertex in $V(J_1)$ is equal to $1$.
    And for each subedge $e$ in $J_2$, either the value of $e$ increases by $\epsilon$ or the value of each vertex in $e$ is equal to 1.
    Thus, $S_{L'}$ is a feasible LP solution. 
    Since $|I_2|\geq |J_2| + 1$, we have that $L' < L$, which contradicts the fact that $S_L$ is an optimal LP solution.

    Thus, we have that $I = 0$.
    By the definition of our LP model, we have that $V(J) = 1$. This lemma holds.
\end{proof}


For any optimum solution $S_L$ to SHS-LP, let vertex sets $A := \{v\in V| x_v = 0\}$ and $V_B := \{v\in V| x_v = 1\}$.
Let a subedge set $B:= \{Y \subseteq V_B : Y \cup \{x\} \in E \text{ for some }x \in A \}$.
Recall that the constraints in SHS-LP ensure that for any hyperedge $\{v_1, v_2, \cdots, v_l\}$ in $E(H)$, there is at most one vertex whose value is set to $0$ by $S_L$.
Thus, we have that $A$ is an independent set.
If there exists a strict HS crown decomposition in $H$,
by Lemma \ref*{LP-lemma}, we can claim that there exists a (strict) HS crown decomposition $(I, J)$ such that $I\subseteq A, J\subseteq B$.
We present the following lemma.


\begin{lemma}\label{RRlp-lemma}
    There exists a polynomial-time algorithm that,
    given an instance $(H, k)$ of \textsc{$d$-Hitting Set} on at least $(2d - 2)k^{d - 1} + k + 1$ vertices where Reduction Rules 1-\ref*{highoccu-reduction} cannot be applied,
    either finds an HS crown decomposition $(I, J)$ with $I \neq \emptyset$ or concludes that $(H, k)$ is a \textbf{no}-instance.
\end{lemma}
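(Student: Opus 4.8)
# Proof Proposal for Lemma~\ref*{RRlp-lemma}

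The plan is to assemble the pieces developed in Sections~5.1 and~5.2 into a constructive polynomial-time procedure. The algorithm is as follows. Given $(H,k)$ with $|V(H)|\geq (2d-2)k^{d-1}+k+1$ and Reduction Rules~1--\ref*{highoccu-reduction} inapplicable, first solve the linear program SHS-LP in polynomial time (any standard LP solver works, since SHS-LP has $|V|$ variables and $|E|$ constraints). From the returned optimal solution $S_L$, extract the vertex sets $A=\{v\in V : x_v=0\}$ and $V_B=\{v\in V : x_v=1\}$, and form the subedge set $B=\{Y\subseteq V_B : Y\cup\{x\}\in E \text{ for some } x\in A\}$. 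If $A=\emptyset$, output \textbf{no}. Otherwise, run the polynomial-time procedure of Lemma~\ref*{strict-LOS-Lemma} on the independent set $A$ (note $A$ is independent because each SHS-LP constraint forbids two zero-valued vertices in a common hyperedge). If that procedure returns a strict HS crown decomposition $(I,J)$ with $\emptyset\neq I\subseteq A$, output it; otherwise output \textbf{no}.

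The correctness argument splits into two directions. For the ``yes'' direction: suppose $(H,k)$ is a \textbf{yes}-instance. Since $|V(H)|\geq (2d-2)k^{d-1}+k+1$ and the reduction rules are inapplicable, Lemma~\ref*{strict-key-bound} guarantees that a strict HS crown decomposition exists in $H$; take a minimal one $(I^*,J^*)$. By Lemma~\ref*{LP-lemma}, $S_L$ sets every vertex of $I^*$ to $0$ and every vertex of $V(J^*)$ to $1$, so $I^*\subseteq A$ and $J^*\subseteq B$, and in particular $A\neq\emptyset$. Now I need that the size condition $|A|\geq |J_A|+1$ (where $J_A=\{Y : Y\cup\{x\}\in E \text{ for some }x\in A\}$) holds, so that Lemma~\ref*{strict-LOS-Lemma} applied to $A$ actually produces a strict HS crown decomposition. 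The key observation is that $(A, J_A)$ itself is essentially an HS crown decomposition candidate containing $(I^*,J^*)$, so restricting to $A$ cannot destroy the counting slack established in the proof of Lemma~\ref*{strict-key-bound} — more directly, since $I^*\subseteq A$ and Lemma~\ref*{strict-LOS-Lemma} only requires the structural independence of $A$ plus $|A|\geq|J_A|+1$, and the latter inequality is inherited because every hyperedge meeting $A$ has exactly one $A$-vertex (value $0$) and $d-1$ vertices of value $1$, the bound on $|J_A|$ runs exactly as in Lemma~\ref*{strict-key-bound} with $I$ replaced by $A$. Hence the call to Lemma~\ref*{strict-LOS-Lemma} succeeds and the algorithm outputs a valid HS crown decomposition with $I\neq\emptyset$.

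For the ``no'' direction: the algorithm outputs \textbf{no} only when $A=\emptyset$ or when Lemma~\ref*{strict-LOS-Lemma}'s procedure fails to return a decomposition with nonempty $I$; in either case the contrapositive of the previous paragraph shows $(H,k)$ cannot be a \textbf{yes}-instance, so concluding \textbf{no} is sound. Note also that the decomposition actually returned is a genuine HS crown decomposition (every strict one is, by definition), so it is suitable for use in Reduction Rule~\ref*{lp-reduction} via Lemma~\ref*{$d$HS-correctness}. Finally, the running time is polynomial: solving SHS-LP, reading off $A$, $V_B$, $B$, and invoking the procedure of Lemma~\ref*{strict-LOS-Lemma} are each polynomial.

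The main obstacle I anticipate is justifying cleanly that the size inequality $|A|\geq |J_A|+1$ transfers from the minimal strict decomposition $(I^*,J^*)$ to the full LP-highlighted set $A$. The subtle point is that $A$ may be strictly larger than $I^*$, so one must re-derive the counting bound for $J_A$ directly from the structure of $H$ (using that Rules~1--\ref*{highoccu-reduction} are inapplicable and that $A$ consists entirely of value-$0$ vertices whose hyperedges are full $d$-edges). Once that is in place, everything else is bookkeeping over the earlier lemmas.
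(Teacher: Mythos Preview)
Your overall architecture is right---solve SHS-LP, read off the zero set $A$, and search for an HS crown decomposition inside $A$---but the step where you invoke Lemma~\ref*{strict-LOS-Lemma} on $A$ has a genuine gap. That lemma has the hypothesis $|A|\geq |J_A|+1$, and you do not actually establish this. Your justification (``the bound on $|J_A|$ runs exactly as in Lemma~\ref*{strict-key-bound} with $I$ replaced by $A$'') does not go through: the counting argument in Lemma~\ref*{strict-key-bound} bounds $|J|$ for the very specific set $I=V\setminus V(W)$, and relies on two structural facts about that $I$---namely $I\cap S=\emptyset$ (since $S\subseteq V(W)$) and every hyperedge meeting $I$ has size exactly $d$ (from the maximality of $W'$). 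Neither fact is available for the LP zero set $A$: nothing rules out $A\cap S\neq\emptyset$, and nothing forces hyperedges through $A$ to be full $d$-edges. So you cannot bound $|J_A|$ the same way, and the global inequality $|A|\geq |J_A|+1$ is simply unproven. You correctly flag this as the main obstacle at the end of your proposal, but you do not resolve it.

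The paper sidesteps this obstacle entirely by calling Lemma~\ref*{q-expansion-finding} rather than Lemma~\ref*{strict-LOS-Lemma}. The point is that Lemma~\ref*{q-expansion-finding} finds a VC crown decomposition in the bipartite graph $G'=(A',B')$ whenever one exists, with no global size hypothesis needed. And one does exist: by Lemma~\ref*{LP-lemma} the minimal strict HS crown decomposition $(I^*,J^*)$ satisfies $I^*\subseteq A$ and $J^*\subseteq B$, so its image in $G'$ is a VC crown decomposition (it is independent, its neighbourhood is $J^{*\prime}$, and the injective mapping is the matching). Thus the algorithm succeeds on every \textbf{yes}-instance without ever needing $|A|\geq |J_A|+1$. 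If you want to keep your use of Lemma~\ref*{strict-LOS-Lemma}, the right fix is to observe that $|N_{G'}(I^{*\prime})|=|J^{*\prime}|<|I^{*\prime}|$ forces Hall's condition to fail for $A'$, so any maximum matching of $G'$ leaves some vertex of $A'$ unsaturated, and then the alternating-path construction inside the proof of Lemma~\ref*{strict-LOS-Lemma} still produces the desired decomposition---but this is essentially re-deriving Lemma~\ref*{q-expansion-finding}.
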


\begin{proof}
    The algorithm works as follows:

    First, solve SHS-LP in polynomial time and get an optimal solution $S_L$.
    Let $A := \{v\in V| x_v = 0\}$ and $V_B := \{v\in V| x_v = 1\}$.
    Let $B:= \{Y \subseteq V_B : Y \cup \{x\} \in E \text{ for some }x \in A \}$.
    We construct an auxiliary bipartite graph $G' = (V', E')$ with $V' = (A', B')$ as follows:
    Each vertex in $A'$ corresponds to a vertex in the vertex set $A$, and each vertex in $B'$ corresponds to a subedge in the subedge set $B$.
    And a vertex $a \in A'$ is adjacent to a vertex $b \in B'$ if and only if the union of the vertex $a$ in $H$ and the subedge corresponding to $b$ is a hyperedge in $H$.

    Second, call the polynomial-time algorithm in Lemma \ref*{q-expansion-finding} to compute a VC crown decomposition $(I', J')$ in $G'$.
    If this algorithm finds a VC crown decomposition $(I', J')$, then we associate $(I', J')$ to the HS crown decomposition $(I, J)$, where $I$ consists of the vertices corresponding to the vertices in $I'$, and $J$ consists of the subedges corresponding to the vertices in $J'$.
    We now check the three conditions in the definition of an HS crown decomposition for $(I, J)$.
    \begin{enumerate}
        \item Since $A$ is an independent set in $H$ and $I\subseteq A$, we have that $I$ is an independent set.
        \item Since there is no edge between $I'$ and $V'\setminus (I'\cup J')$, let $J_{I} := \{Y \subseteq V : Y \cup \{x\} \in E \text{ for some }x \in I \}$. we have that $J_{I} = J$. 
        \item Since there is an injective mapping (matching) $M': J'\rightarrow I'$ such that, $\forall x\in J', \{x, M(x)\}\in E(G')$,
        there is an injective mapping (matching from subedges to vertices) $M: J\rightarrow I$ such that $\{M(X)\}\cup X \in E(H)$ holds for all $X\in J$ by transforming the vertices in $M'$ into the corresponding vertices or subedges in $H$.
    \end{enumerate}

    Since $I'\neq \emptyset$, we have that $I \neq \emptyset$.
    So $(I, J)$ is an HS crown decomposition of $V$ with $I \neq \emptyset$.

        If this algorithm cannot find such a VC crown decomposition, conclude that $(H, k)$ is a \textbf{no}-instance.
        Now, we show that this conclusion is correct. 
        By contradiction, suppose $(H, k)$ is a \textbf{yes}-instance.
        By Lemma \ref*{strict-key-bound}, there exists a strict HS crown decomposition in $H$.
        For any minimal strict HS crown decomposition $(I, J)$, by Lemma \ref*{LP-lemma}, we know that any optimal LP solution to SHS-LP sets all variables corresponding to vertices in $I$ to $0$, and all variables corresponding to vertices in $V(J)$ to $1$, which means that $I\subseteq A$ and $J\subseteq B$.
        We associate $(I, J)$ to the VC crown decomposition $(I', J')$ of $V'$,
        where $I'$ is the vertices corresponding to the vertices in $I$, $J'$ is the vertices corresponding to the subedges in $J$.
        It is easy to check that $(I', J')$ is a feasible VC crown decomposition in $G'$. 
        However, in this case, by Lemma \ref*{q-expansion-finding}, our algorithm will find a VC crown decomposition in $G'$, which contradicts the fact that our algorithm cannot find a VC crown decomposition.
\end{proof}

Thus, Lemma \ref{RRlp-lemma} directly implies the Reduction Rule \ref*{lp-reduction}.



\section{Conclusion}

In this paper, we present an improved vertex-kernel of size $(2d - 2)k^{d - 1} + k$  for \textsc{$d$-Hitting Set} for any fixed $d\geq 3$. 
Specifically, for \textsc{3-Hitting Set}, we obtain a vertex-kernel of size $4k^2 + k$.
Since many problems admit their best-known vertex-kernels via reductions from \textsc{$d$-Hitting Set},
our result also yields improved vertex-kernels for several related problems, such as \textsc{Triangle-Free Vertex Deletion}.

In our algorithm, we utilize LP to identify HS crown decompositions. LP serves as a powerful tool for deriving kernelizations. By leveraging the half-integrality property of LP solutions, a $2k$-vertex kernel for the \textsc{Vertex Cover} problem can be obtained~\cite{chen2001vertex}. Another example that exploits this property---specifically in the context of $k$-submodular relaxations---is the $2k^2 + k$-vertex kernel for the \textsc{Feedback Vertex Set} problem~\cite{DBLP:conf/icalp/Iwata17}. Furthermore, Kumar and Lokshtanov~\cite{DBLP:conf/iwpec/KumarL16} derived a $2dk$-vertex kernel for the \textsc{$d$-Component Order Connectivity} problem, for any fixed $d \geq 1$, by employing LP to compute a variant of crown decomposition for reduction. We believe that LP holds further potential for kernelization in a wider range of problems.

For future work, it remains an open question whether \textsc{$d$-Hitting Set} admits a kernel with $O(k^{d - 1 - \epsilon})$ vertices for some $\epsilon > 0$. Additionally, establishing nontrivial lower bounds on the size of vertex-kernels presents an interesting direction for further research.



\bibliography{dHS}

\end{document}